% envcountsame : If you want the theorem-like environments to use the same counter just specify this option
\documentclass[envcountsame]{llncs}

%%% Mahesh added this package to avoid issues with \textsubscript
\usepackage{fixltx2e}

\usepackage{silence}
\WarningFilter{amsmath}{Unable to redefine math accent}
\WarningFilter{floatflt}{Floating figures 1 and 2 colliding}
%\WarningFilter{latexfont}{Font shape `OMS/lmtt/m/n' undefined using `OMS/cmsy/m/n' instead for symbol `textbraceleft'}
%\WarningFilter{latexfont}{Font shape `T1/lmtt/bx/it' in size}
%\WarningFilter{latexfont}{Font shape `U/stmry/b/n' undefined using `U/stmry/m/n' instead}
%\WarningFilter{latexfont}{Size substitution}
%\WarningFilter{latexfont}{Some font shapes were not available, defaults substituted}
%\WarningFilter{hyperref}{Token not allowed in a PDF string (PDFDocEncoding)}
%\WarningFilter{caption}{Unsupported document class (or package) detected, usage of the caption package is not recommended}

\usepackage{amssymb,amsmath,amsthm}
\usepackage{cite}																		  % ordered citations and merging consecutive (e.g. 1-3)
\usepackage{graphicx}
\usepackage[colorlinks=true]{hyperref}
\usepackage{mathtools}																% \DeclarePairedDelimiter
\usepackage{soul}																			% makes highlighting possible (\Doubt, \More, \Note, ...)
\usepackage{url}
\usepackage[dvipsnames]{xcolor}
\usepackage{xifthen}
\usepackage{xparse}
\usepackage{lmodern}																	% removes some of 'Font shape' warnings
\usepackage[T1]{fontenc}                              % removes some of 'Font shape' warnings
\usepackage{xstring}                                  % IfBeginWith
\usepackage{paralist}                                 % inparaenum
\usepackage{enumitem}                                 % resume numbering in enumeration
\usepackage{stmaryrd}																	% llbracket & rrbracket
\usepackage{floatflt}                                 % floatingfigure
\usepackage{wrapfig}
\usepackage{float}                                    % force image location 
\usepackage[caption=false]{subfig}
\usepackage{mfirstuc}                                 % makefirstuc

% bold font is not defined for a lot of math symbols (hence a lot of warning), I use this to say use normal font in those cases.
\SetSymbolFont{stmry}{bold}{U}{stmry}{m}{n}

\usepackage{tikz}
\usetikzlibrary{positioning}
\usetikzlibrary{shadows.blur}
\usetikzlibrary{shapes}
\usetikzlibrary{arrows}
\usetikzlibrary{fit}

\pgfdeclarelayer{bg}    % declare background layer
\pgfsetlayers{bg,main}  % set the order of the layers (main is the standard layer)

\hypersetup{
  pdftitle={Revisiting MITL to Fix Decision Procedures},
  pdfauthor={Nima Roohi, Mahesh Viswanathan},
  colorlinks=true,
  citecolor={RoyalPurple},
  linkcolor = {violet},
  bookmarksopen=false,
  bookmarksnumbered=true
}

\setlength{\abovedisplayskip}{0.1\baselineskip}
\setlength{\belowdisplayskip}{0.1\baselineskip}

%% Mahesh added
%\spnewtheorem{claim}[theorem]{Claim}

\definecolor{redish}      {rgb}{0.8, 0.1, 0.1}
\definecolor{blueish}     {rgb}{0  , 0  , 0.4}
\definecolor{greenish}    {rgb}{0  , 0.6, 0  }
\definecolor{yellowish}   {rgb}{0.8, 0.5, 0  }
\definecolor{redishBG}    {rgb}{1  , 0.3, 0.3}
\definecolor{blueishBG}   {rgb}{0.4, 0.4, 1  }
\definecolor{greenishBG}  {rgb}{0.2, 1  , 0.2}
\definecolor{yellowishBG} {rgb}{1  , 0.7, 0  }

%% ------------------------------------------------------------------------

%\NewDocumentCommand{\Doubt}{m}{}
%\NewDocumentCommand{\More} {m}{}
%\NewDocumentCommand{\Note}  {m}{}
%\NewDocumentCommand{\Delete}{m}{}

%% ------------------------------------------------------------------------
 
\newtheoremstyle{black-th}{}{}{}{}{\color{black}\bfseries}{.}{ }{} 
\newtheoremstyle{red-th}{}{}{}{}{\color{redish}\bfseries}{.}{ }{} 
\newtheoremstyle{blue-th}{}{}{}{}{\color{blueish}\bfseries}{.}{ }{} 
\newtheoremstyle{green-th}{}{}{}{}{\color{greenish}\bfseries}{.}{ }{} 

\newcommand{\PutPVSmark}{ERROR}
\newcommand{\PutPVSaddr}{ERROR}
\DeclareExpandableDocumentCommand{\PVSaddress}{sm}{%
  \IfBooleanTF{#1}%
  {% this is a reference to another place
    \renewcommand{\PutPVSmark}{\textsuperscript{\ref{#2}}}%
    \renewcommand{\PutPVSaddr}{}%
  }%
  {% this is a new address
    \renewcommand{\PutPVSmark}{\footnotemark}%
    \renewcommand{\PutPVSaddr}{\footnotetext{#2}}%
  }%
}
\newtheoremstyle{pvs-th}{}{}{}{}{}{.~}{0pt}{%
%  {\bfseries\thmname{#1}\thmnumber{ #2}\textsubscript{\,\PVS}}\thmnote{ {\bf(#3)}}%
  {\bfseries%
   \thmname{#1}% Theorem, Lemma, Proposition, ...
   \thmnumber{ #2}% Number of current Theorem
   \textsubscript{~\PVS\textsuperscript{\PutPVSmark}}%
  }%
  \thmnote{ {\bf(#3)}}% add name of the theorem (if there is any)
  \PutPVSaddr
} 

\newcommand{\Alg}   {Algorithm}
\newcommand{\Apx}   {Appendix}
\newcommand{\Cor}   {Corollary}
\newcommand{\Def}   {Definition}
\newcommand{\Eq}    {Equation}
\newcommand{\Fig}   {Figure}
\newcommand{\Lem}   {Lemma}
\newcommand{\Prop}  {Proposition}

\newcommand{\Sec}   {Section}
\newcommand{\Tbl}   {Table}
\newcommand{\Thm}   {Theorem}
\newcommand{\Cnd}   {Condition}
\newcommand{\Prb}   {Problem}
\newcommand{\Ex}    {Example}
\newcommand{\Itm}   {Item}
\newcommand{\Typ}   {Type}

\NewDocumentCommand{\aref}{m}{%
  \IfBeginWith{#1}{sec:} {\Sec~\ref{#1}}{%
  \IfBeginWith{#1}{thm:} {\Thm~\ref{#1}}{%
  \IfBeginWith{#1}{lem:} {\Lem~\ref{#1}}{%
  \IfBeginWith{#1}{cor:} {\Cor~\ref{#1}}{%
  \IfBeginWith{#1}{def:} {\Def~\ref{#1}}{%
  \IfBeginWith{#1}{apx:} {\Apx~\ref{#1}}{%
  \IfBeginWith{#1}{alg:} {\Alg~\ref{#1}}{%
  \IfBeginWith{#1}{fig:} {\Fig~\ref{#1}}{%
  \IfBeginWith{#1}{tbl:} {\Tbl~\ref{#1}}{%
  \IfBeginWith{#1}{rem:} {\Rem~\ref{#1}}{%
  \IfBeginWith{#1}{cnd:} {\Cnd~\ref{#1}}{%
  \IfBeginWith{#1}{prb:} {\Prb~\ref{#1}}{%
  \IfBeginWith{#1}{itm:} {\Itm~\ref{#1}}{%
  \IfBeginWith{#1}{eq:}  {\Eq~\ref{#1}}{%
  \IfBeginWith{#1}{ex:}  {\Ex~\ref{#1}}{%
  \IfBeginWith{#1}{typ:} {\Typ~\ref{#1}}{%
  \IfBeginWith{#1}{prop:}{\Prop~\ref{#1}}{%
  \errmessage{class of label '#1' is not defined.}%
	}}}}}}}}}}}}}}}}}%
}

\theoremstyle{pvs-th}
\newtheorem{pvstheorem}     [theorem] {\Thm}
\newtheorem{pvslemma}       [theorem] {\Lem}

\newtheorem{pvsproposition} [theorem] {\Prop}

\newcommand{\ProofSketch}{Proof (sketch)}
\newcommand{\ProofIdea}{Proof (idea)}

\theoremstyle{black-th}

\makeatletter
\newtheorem*{rep@theorem}{\rep@title}
\newcommand{\newreptheorem}[2]{%
\newenvironment{rep#1}[1]{%
 \def\rep@title{#2 \ref{##1}}%
 \begin{rep@theorem}}%
 {\end{rep@theorem}}}
\makeatother

\newreptheorem{corollary}   {\Corl}                         
\newreptheorem{lemma}       {\Lem}
\newreptheorem{theorem}     {\Thm}
\newreptheorem{proposition} {\Prop}

%% ------------------------------------------------------------------------

%\newenvironment{myitems}{\begin{itemize}}{\end{itemize}}
%\newenvironment{myenums}{\begin{enumerate}}{\end{enumerate}}
\newenvironment{myitems}{\begin{compactitem}}{\end{compactitem}}
\newenvironment{myenums}{\begin{compactenum}}{\end{compactenum}}

%% ------------------------------------------------------------------------

\DeclarePairedDelimiter{\Ceil}    {\lceil}{\rceil}

\DeclarePairedDelimiter{\Size}    {\vert}{\vert}
\DeclarePairedDelimiter{\Paren}   {\lparen}{\rparen}
\DeclarePairedDelimiter{\Brace}   {\lbrace}{\rbrace}
\DeclarePairedDelimiter{\Braket}  {\lbrack}{\rbrack}
\DeclarePairedDelimiter{\Sem}     {\llbracket}{\rrbracket}
\DeclarePairedDelimiter{\Norm}    {\lVert}{\rVert}

\DeclarePairedDelimiter{\LclRop}  {\lbrack}{\rparen}
\DeclarePairedDelimiter{\LopRcl}  {\lparen}{\rbrack}
\DeclarePairedDelimiter{\Closed}  {\lbrack}{\rbrack}
\DeclarePairedDelimiter{\Opened}  {\lparen}{\rparen}

\DeclareExpandableDocumentCommand{\LeftQ} {}{\scalebox{-0.7}[0.7]{\raisebox{-0.5mm}{?}}\!}
\DeclareExpandableDocumentCommand{\RightQ}{}{\!\scalebox{0.7}[0.7]{\raisebox{-0.5mm}{?}}}

\DeclarePairedDelimiter{\IsLClosed} {{}_{\LeftQ}\lbrack}{\rangle}
\DeclarePairedDelimiter{\IsRClosed} {\langle}{\rbrack_{\RightQ}}

\DeclarePairedDelimiter{\IsLOpen}   {{}_{\LeftQ}\lparen}{\rangle}

\DeclarePairedDelimiter{\CloseL}  {\lbrack}{\rangle}
\DeclarePairedDelimiter{\CloseR}  {\langle}{\rbrack}
\DeclarePairedDelimiter{\Open}    {\lparen\!\lvert}{\rvert\!\rparen}

\DeclarePairedDelimiter{\OpenLCloseR}{\lparen\!\lvert}{\rbrack}
\DeclarePairedDelimiter{\OpenRCloseL}{\lbrack}{\rvert\!\rparen}

%% ------------------------------------------------------------------------

\NewDocumentCommand{\GEz}     {}{{\scalebox{0.5}{${\geq}0$}}}
\NewDocumentCommand{\GTz}     {}{{\scalebox{0.6}{$+$}}}

\NewDocumentCommand{\Nat}     {}{\ensuremath{\mathbb{N}}}

\NewDocumentCommand{\Real}    {}{\ensuremath{\mathbb{R}}}
\NewDocumentCommand{\pNat}    {}{\ensuremath{\Nat_\GTz}}

\NewDocumentCommand{\pReal}   {}{\ensuremath{\Real_\GTz}}

\NewDocumentCommand{\nnReal}  {}{\ensuremath{\Real_\GEz}}

%% ------------------------------------------------------------------------

\NewDocumentCommand{\iFF} {}{\mbox{iff}}    % if and only if
\NewDocumentCommand{\IFF} {}{\mbox{Iff}}    % If and only if

\NewDocumentCommand{\ie}  {}{{\em i.e.}}

\NewDocumentCommand{\wLOG}{}{{\em wlog.}}

%% ------------------------------------------------------------------------

\DeclareExpandableDocumentCommand{\BNF}   {}{\texttt{BNF}}
\DeclareExpandableDocumentCommand{\NNF}   {}{\texttt{NNF}}
\DeclareExpandableDocumentCommand{\CNF}   {}{\texttt{CNF}}
\DeclareExpandableDocumentCommand{\DNF}   {}{\texttt{DNF}}
\DeclareExpandableDocumentCommand{\MTL}   {}{\texttt{MTL}}
\DeclareExpandableDocumentCommand{\LTL}   {}{\texttt{LTL}}
\DeclareExpandableDocumentCommand{\MITL}  {}{\texttt{MITL}}
\DeclareExpandableDocumentCommand{\MITLzi}{}{\MITL\textsubscript{$0,\!\infty$}}
\DeclareExpandableDocumentCommand{\MITLwi}{}{\MITL\textsubscript{WI}}
 
%% ------------------------------------------------------------------------

\NewDocumentCommand{\PSPACE}          {}{\mbox{\scalebox{0.85}{\textsf{PSPACE}}}}

\NewDocumentCommand{\EXPSPACE}        {}{\mbox{\scalebox{0.85}{\textsf{EXPSPACE}}}}

\NewDocumentCommand{\PSPACEhard}      {}{\PSPACE-hard}

\NewDocumentCommand{\PSPACEcomp}      {}{\PSPACE-complete}

\NewDocumentCommand{\EXPSPACEcomp}    {}{\EXPSPACE-complete}

\NewDocumentCommand{\PVS} {}{\ensuremath{\mathtt{PVS}}}

%% ------------------------------------------------------------------------

\NewDocumentCommand{\PVSLabel}{smm}{%
\scriptsize%
\texttt{#3}@\texttt{#2}\IfBooleanF{#1}{.}%
}

%% ------------------------------------------------------------------------

\NewDocumentCommand{\WeHave}  {}{\scalebox{0.3}{\ }\raisebox{-1ex}{\scalebox{2.3}{$\mathrel{\cdot}$}}}
\NewDocumentCommand{\SuchThat}{}{\WeHave}
\NewDocumentCommand{\filter}  {}{\ | \ }
\NewDocumentCommand{\oftype}  {}{\ensuremath{\mathrel{:}}}

\NewDocumentCommand{\goesto}  {O{}} {\ensuremath{\xrightarrow{#1}}}
\NewDocumentCommand{\Implies} {O{}} {\ensuremath{\xrightarrow{#1}}}
%\NewDocumentCommand{\pset}    {m}   {\ensuremath{\mathcal P}\Paren{#1}}   % power set
\NewDocumentCommand{\pset}    {m}   {2^{#1}}   % power set
 
\NewDocumentCommand{\AAA}{}{\ensuremath{\mathcal{A}}}
\NewDocumentCommand{\BBB}{}{\ensuremath{\mathcal{B}}}
\NewDocumentCommand{\CCC}{}{\ensuremath{\mathcal{C}}}
\NewDocumentCommand{\TTT}{}{\ensuremath{\mathcal{T}}}

\NewDocumentCommand{\defEQ}     {}   {\ensuremath{\coloneqq}}
\NewDocumentCommand{\Sat}       {}   {\ensuremath{\models}}
\NewDocumentCommand{\nSat}      {}   {\ensuremath{\not\models}}
\NewDocumentCommand{\Always}    {}   {\ensuremath{\square}}
\NewDocumentCommand{\Eventually}{}   {\ensuremath{\lozenge}}

%\definecolor{LightGray}{rgb}{0.70,0.70,0.70}
%\definecolor{DarkGray}{rgb}{0.55,0.55,0.55}
%\DeclareExpandableDocumentCommand{\GrayColor}{}{LightGray}
%\DeclareExpandableDocumentCommand{\MakeGrayColorDarker}{}{\DeclareExpandableDocumentCommand{\GrayColor}{}{DarkGray}}
%\DeclareExpandableDocumentCommand{\MakeGrayColorLighter}{}{\DeclareExpandableDocumentCommand{\GrayColor}{}{LightGray}}
\definecolor{MyGray}{rgb}{0.45,0.45,0.45}
\DeclareExpandableDocumentCommand{\GrayColor}{}{MyGray}

\DeclareExpandableDocumentCommand{\OLDrawstr}{}{\mathtt{OLD}}
\DeclareExpandableDocumentCommand{\NEWrawstr}{}{\mathtt{NEW}}

\DeclareExpandableDocumentCommand{\OLDstr}{}{\scalebox{0.4}{\ensuremath{\color{\GrayColor}\OLDrawstr}}}
\DeclareExpandableDocumentCommand{\NEWstr}{}{\scalebox{0.4}{\ensuremath{\color{\GrayColor}\NEWrawstr}}}
\newlength{\SatLength}
\newlength{\SatHeight}
\settowidth{\SatLength}{$\Sat$}
\settoheight{\SatHeight}{$\Sat$}

\NewDocumentCommand{\VersionedSat}{mm}{%
  \ensuremath{%
  \settowidth{\SatLength}{$\Sat$}%
  \settoheight{\SatHeight}{$\Sat$}%
  \mathbin{#1\raisebox{-0.32\SatHeight}{\hspace{-1.05\SatLength}#2}}%
  }%
}

\NewDocumentCommand{\OSat}  {}{\VersionedSat{\Sat}{\OLDstr}}
\NewDocumentCommand{\ONSat} {}{\VersionedSat{\nSat}{\OLDstr}}
\NewDocumentCommand{\NSat}  {}{\VersionedSat{\Sat}{\NEWstr}}
\NewDocumentCommand{\NNSat} {}{\VersionedSat{\nSat}{\NEWstr}}

\NewDocumentCommand{\AP}      {}{\ensuremath{\mathtt{AP}}}

\NewDocumentCommand{\allII} {}{\ensuremath{\mathcal{I}}}
\NewDocumentCommand{\nnII}  {}{\ensuremath{\mathcal{I}_\GEz}}
\NewDocumentCommand{\IIc}   {}{\ensuremath{\mathcal{I}}}
\NewDocumentCommand{\II}    {}{\ensuremath{\mathtt{I}}}
\NewDocumentCommand{\JJ}    {}{\ensuremath{\mathtt{J}}}
\NewDocumentCommand{\PP}    {}{\ensuremath{\mathcal{P}}}

\NewDocumentCommand{\Sgn}  {}{\ensuremath{f}}
\NewDocumentCommand{\Sub}  {}{\ensuremath{\mathcal{S}}}

\makeatletter
\newcommand{\Minos}{\mathbin{\text{\@Minos}}}
\newcommand{\@Minos}{%
  \ooalign{\hidewidth\raise1ex\hbox{.}\hidewidth\cr$\m@th-$\cr}%
}
\makeatother

\DeclareExpandableDocumentCommand{\UntilOp}  {}{\ensuremath{\mathcal{U}}}
\DeclareExpandableDocumentCommand{\ReleaseOp}{}{\ensuremath{\mathcal{R}}}
\NewDocumentCommand{\Until}  {mmo}{\ensuremath{#1\UntilOp\IfValueT{#3}{_{#3}}#2}}
\NewDocumentCommand{\Release}{mmo}{\ensuremath{#1\ReleaseOp\IfValueT{#3}{_{#3}}#2}}

\NewDocumentCommand{\infOBall}{}{\ensuremath{\mathtt{B}_\infty}}

\NewDocumentCommand{\iInf}{m}{\ensuremath{\underline{#1}}}
\NewDocumentCommand{\iSup}{m}{\ensuremath{\overline{#1}}}
\NewDocumentCommand{\Closure}{m}{\ensuremath{\mathtt{cl}\Paren{#1}}}

\NewDocumentCommand{\ltlG}{}{\ensuremath{\square}}
\NewDocumentCommand{\ltlF}{}{\ensuremath{\lozenge}}
\NewDocumentCommand{\ltlN}{}{\ensuremath{\scalebox{0.8}{$\bigcirc$}}}

\NewDocumentCommand{\Uwitness} {}{\ensuremath{\mathtt{witness}_{\scalebox{0.6}{\UntilOp}}}}
\NewDocumentCommand{\Rwitness} {}{\ensuremath{\mathtt{witness}_{\scalebox{0.6}{\ReleaseOp}}}}
\NewDocumentCommand{\Uproofset}{}{\ensuremath{\mathtt{proofset}_{\scalebox{0.6}{\UntilOp}}}}
\NewDocumentCommand{\Rproofset}{}{\ensuremath{\mathtt{proofset}_{\scalebox{0.6}{\ReleaseOp}}}}

\NewDocumentCommand{\UnaryFunc}{mmo}{\ensuremath{%
  \IfValueTF{#3}%
  {\IfBooleanTF{#1}%
    {#2\Paren*{#3}}%
    {#2\Paren{#3}}%
  }%
  {#2}%
}}

\NewDocumentCommand{\fvarR}{so}{\UnaryFunc{#1}{\mathtt{fvar_R}}[#2]}
\NewDocumentCommand{\fvarL}{so}{\UnaryFunc{#1}{\mathtt{fvar_L}}[#2]}
\NewDocumentCommand{\fvar}{so}{\UnaryFunc{#1}{\mathtt{fvar}}[#2]}

\NewDocumentCommand{\nnf}   {so}{\UnaryFunc{#1}{\mathtt{nnf}}[#2]}
\NewDocumentCommand{\toOld} {so}{\UnaryFunc{#1}{\mathtt{old}}[#2]}
\NewDocumentCommand{\Depth} {so}{\UnaryFunc{#1}{\mathtt{depth}}[#2]}
\NewDocumentCommand{\Height}{so}{\UnaryFunc{#1}{\mathtt{height}}[#2]}
\NewDocumentCommand{\Step}  {so}{\UnaryFunc{#1}{\mathtt{sz}}[#2]}
\NewDocumentCommand{\DS}    {so}{\UnaryFunc{#1}{\mathtt{DS}}[#2]}
\NewDocumentCommand{\Ops}   {so}{\UnaryFunc{#1}{\mathtt{ops}}[#2]}
\NewDocumentCommand{\BigO}  {so}{\UnaryFunc{#1}{\mathcal{O}}[#2]}

%% ------------------------------------------------------------------------

% 1. name     used as label and command name
% 2. singular 
% 3. plural   
\NewDocumentCommand{\Term}{mmm}{%
  \expandafter\DeclareExpandableDocumentCommand\expandafter{\csname #1ls\endcsname}{}{\MakeLowercase{#2}}%
  \expandafter\DeclareExpandableDocumentCommand\expandafter{\csname #1lp\endcsname}{}{\MakeLowercase{#3}}%
  \expandafter\DeclareExpandableDocumentCommand\expandafter{\csname #1cs\endcsname}{}{#2}%
  \expandafter\DeclareExpandableDocumentCommand\expandafter{\csname #1cp\endcsname}{}{#3}%
  \expandafter\DeclareExpandableDocumentCommand\expandafter{\csname #1us\endcsname}{}{\makefirstuc{\MakeLowercase{#2}}}%
  \expandafter\DeclareExpandableDocumentCommand\expandafter{\csname #1up\endcsname}{}{\makefirstuc{\MakeLowercase{#3}}}%
} 

\NewDocumentCommand{\STORMED}{}{\mbox{\scalebox{0.9}{STORMED}}}%Its case wont change
\NewDocumentCommand{\Buchi}{}{B\"uchi}

\Term{BA}       {\Buchi\ Automaton}                      {\Buchi\ Automata}
\Term{TS}       {Transition System}                      {Transition Systems}
\Term{HS}       {Hybrid System}													 {Hybrid Systems}
\Term{HA}       {Hybrid Automaton}                       {Hybrid Automata}
\Term{TA}       {Timed Automaton}                        {Timed Automata}
\Term{rTA}      {Rational Timed Automaton}               {Rational Timed Automata}
\Term{irTA}     {Irrational Timed Automaton}             {Irrational Timed Automata}
\Term{PolyHA}   {Polyhedral Automaton}                   {Polyhedral Automata}
\Term{nlfPolyHA}{Non-linear Polyhedral Automaton}        {Non-linear Polyhedral Automata}
\Term{RHA}      {Rectangular Automaton}                  {Rectangular Automata}
\Term{MRHA}     {Monotonic Rectangular Automaton}        {Monotonic Rectangular Automata}
\Term{iMRHA}    {Initialzied Monotonic Rectangular Automaton}        {Initialized Monotonic Rectangular Automata}
\Term{LIHA}     {Linear Inclusion  Automaton}            {Linear Inclusion Automata}
\Term{iRHA}     {Initialized Rectangular Automaton}      {Initialized Rectangular Automata}
\Term{iLIHA}    {Initialized Linear Inclusion Automaton} {Initialized Linear Inclusion Automata}
\Term{AHA}      {Affine Hybrid Automaton}                {Affine Hybrid Automata}
\Term{LEHA}     {Linear Equation Automaton}              {Linear Equation Automata}
\Term{iLEHA}    {Initialized Linear Equation Automaton}  {Initialized Linear Equation Automata}
\Term{SWHA}     {Stopwatch Automaton}                    {Stopwatch Automata}
\Term{iSWHA}    {Initialized Stopwatch Automaton}        {Initialized Stopwatch Automata}
\Term{SHA}      {Solvable Hybrid Automaton}              {Solvable Hybrid Automata}
\Term{SAOS}     {Semi-Algebraic O-Minimal System}        {Semi-Algebraic O-Minimal Systems}
\Term{SASS}			{Semi-Algebraic \STORMED\ System}			   {Semi-Algebraic \STORMED\ Systems}
\Term{tCM}			{$2$-Counter Machine}									   {$2$-Counter Machines}

%% ------------------------------------------------------------------------

%% --------------------------------------------------------------------------------------------------------------------------------------------------
%% ------- T I T L E - A N D - A U T H O R S --------------------------------------------------------------------------------------------------------
%% --------------------------------------------------------------------------------------------------------------------------------------------------
%\title{Revisiting MITL: Impossible Semantics, Incorrect Decision Procedures}
\title{Revisiting MITL to Fix Decision Procedures}
  
\author{Nima Roohi\inst{1} \and Mahesh Viswanathan\inst{2}}
 
\institute{
	Department of Computer and Information Science\\
  University of Pennsylvania\footnote{% 
    Part of this work was carried out while the first author was at the University of Illinois, Urbana-Champaign.}\\
%		Most of this work has been done while the first author was a student at University of Illinois Urbana-Champaign.}\\
  \email{roohi2@cis.upenn.edu}
  \and
  Department of Computer Science\\
  University of Illinois Urbana-Champaign\\
	\email{vmahesh@illinois.edu}}
 
%% --------------------------------------------------------------------------------------------------------------------------------------------------
%% ------- M A I N - B O D Y ------------------------------------------------------------------------------------------------------------------------
%% --------------------------------------------------------------------------------------------------------------------------------------------------
\begin{document}
\maketitle  

\setlength\abovedisplayskip{0.2\baselineskip}
\setlength\belowdisplayskip{0.2\baselineskip}
\setlength\abovedisplayshortskip{0.1\baselineskip}
\setlength\belowdisplayshortskip{0.1\baselineskip}
\setlength\abovecaptionskip{\baselineskip}
\setlength\belowcaptionskip{\baselineskip}
\setlength\textfloatsep{0pt}

\begin{abstract}
\emph{Metric Interval Temporal Logic (\MITL)} is a well studied
real-time, temporal logic that has decidable satisfiability and model
checking problems. The decision procedures for {\MITL} rely on the
automata theoretic approach, where logic formulas are translated into
equivalent {\TAlp}. Since {\TAlp} are not closed under
complementation, decision procedures for {\MITL} first convert a
formula into negated normal form before translating to a {\TAls}. We
show that, unfortunately, these 20-year-old procedures are incorrect,
because they rely on an incorrect semantics of the $\ReleaseOp$
operator. We present the right semantics of $\ReleaseOp$ and give new,
correct decision procedures for {\MITL}. We show that both
satisfiability and model checking for {\MITL} are {\EXPSPACEcomp}, as
was previously claimed. We also identify a fragment of {\MITL} that we
call {\MITLwi} that is richer than {\MITLzi}, for which we show that
both satisfiability and model checking are {\PSPACEcomp}. Many of our
results have been formally proved in {\PVS}.
\end{abstract}

\section{Introduction}\label{sec:intro}

% Motivate MTL/MITL
% Describe decision procedures for MITL
% Error in definition of release
% Fix and a generalization of wide MITL

%{\em Linear Temporal Logic (\LTL)} is used to formally specify properties of concurrent and reactive systems~\cite{77-LTL}. 

Specifications for real time systems often impose quantitative timing
constraints between events that are temporally ordered. Classical
temporal logics such as Linear Temporal Logic (\LTL)~\cite{77-LTL} are
therefore not adequate. Among the many \emph{real-time} extensions of
\LTL, the most well studied is \emph{Metric Temporal Logic
  (\MTL)}~\cite{90-MTL}. The temporal modalities in this logic, like
$\UntilOp_{\II}$, are constrained by a time interval $\II$ which
requires that the second argument of the $\UntilOp$ operator be
satisfied in the interval $\II$.
For example, the \MTL\ formula in \aref{eq:smpl-mtl} specifies that,
at all times, every request should be followed by a response within
$5$ units of time, or in case there is no response during that time,
an error should be raised within the next $0.1$ units of time.
\begin{equation}\label{eq:smpl-mtl}
\newcommand{\REQ}{\mathtt{req}}
\newcommand{\RES}{\mathtt{resp}}
\newcommand{\OUT}{\mathtt{error}}
\Always\ \ 
  \REQ\rightarrow\Paren*{
    \Eventually_{\LclRop{0,5}}\RES\vee
    \Paren*{\Always_{\LclRop{0,5}}\neg\RES\ \ \wedge\ \ \Eventually_{\LopRcl{5,5.1}}\OUT}
  }
\end{equation}

Classical decision problems for any logic are satisfiability and model
checking. For \MTL, these problems are highly undecidable; both
problems are $\Sigma^1_1$-complete~\cite{96-MITL}. Undecidability in
these cases arises because of specifications that require events to
happen exactly at certain time points, which can be described in the
logic by having temporal operators decorated by singleton intervals
(i.e., intervals containing exactly one point). If one considers the
sublogic of \MTL, called {\em Metric Interval Temporal Logic (\MITL)},
which prohibits the use of singleton intervals, then both these
problems are claimed to be \EXPSPACEcomp~\cite{96-MITL}.

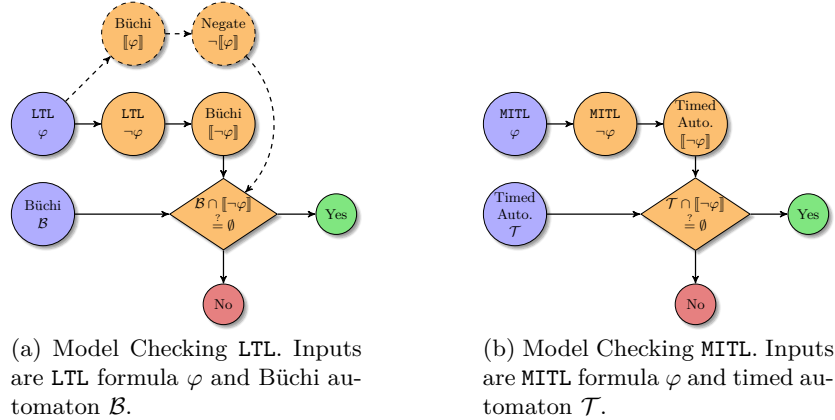
\begin{figure}[H]
\vspace{-2mm}
	\centering
  \subfloat[Model Checking \LTL.
            Inputs are \LTL\ formula $\varphi$ and \BAls\ $\BBB$.
            \label{fig:mc-ltl}]{%
    \scalebox{0.6}{\begin{tikzpicture}[>=stealth']
  \definecolor{purple}{rgb}{0.69,0.68,0.996}
  \definecolor{orange}{rgb}{0.99,0.75,0.443}
  \definecolor{green} {rgb}{0.50,0.90,0.500}
  \definecolor{red}   {rgb}{0.90,0.50,0.500}

  \tikzset{Step/.style= {circle,inner sep=0em,align=center,fill=purple,draw=black,line width=0.2mm,blur shadow={shadow blur steps=5}}}
  \tikzset{Step2/.style={circle,inner sep=0em,align=center,fill=purple,draw=black,line width=0.2mm,blur shadow={shadow blur steps=5,dashed}}}
  \tikzset{Decision/.style={diamond,aspect=1.5,inner sep=0em,align=center,fill=orange,draw=black,line width=0.2mm,blur shadow={shadow blur steps=5}}}

  \path
  (0.0,4.0) node[Step,minimum size=14mm,fill=purple] (fml) {\LTL\\$\varphi$}
  (2.0,4.0) node[Step,minimum size=14mm,fill=orange] (neg) {\LTL\\$\neg\varphi$}
  (4.0,4.0) node[Step,minimum size=14mm,fill=orange] (sem) {\Buchi\\$\Sem{\neg\varphi}$}
  (0.0,2.0) node[Step,minimum size=14mm,fill=purple] (aut) {\Buchi\\$\BBB$}
  (4.0,2.0) node[Decision]                           (cnd) {$\BBB\cap\Sem{\neg\varphi}$\\$\stackrel{?}{=}\emptyset$}
  (6.5,2.0) node[Step,minimum size=9mm,fill=green]   (yes) {Yes}
  (4.0,0.0) node[Step,minimum size=9mm,fill=red]     (no)  {No}
  (2.0,6.0) node[Step2,minimum size=14mm,fill=orange] (sem2) {\Buchi\\$\Sem{\varphi}$}
  (4.0,6.0) node[Step2,minimum size=14mm,fill=orange] (neg2) {Negate\\$\neg\Sem{\varphi}$};
  \draw[->,black,thick] (fml) -- (neg);
  \draw[->,black,thick] (neg) -- (sem);
  \draw[->,black,thick] (sem) -- (cnd);
  \draw[->,black,thick] (aut) -- (cnd);
  \draw[->,black,thick] (cnd) -- (yes);
  \draw[->,black,thick] (cnd) -- (no);
  \path[->,black,thick,dashed]
  (fml)  edge                               (sem2)
  (sem2) edge                               (neg2)
  (neg2) edge[bend left=45] node [right] {} (cnd);
\end{tikzpicture}  }}
	\hspace{15mm}
  \subfloat[Model Checking \MITL.
            Inputs are \MITL\ formula $\varphi$ and \TAls\ $\TTT$.
            \label{fig:mc-mitl}]{%
    \scalebox{0.6}{\begin{tikzpicture}[>=stealth']
  \definecolor{purple}{rgb}{0.69,0.68,0.996}
  \definecolor{orange}{rgb}{0.99,0.75,0.443}
  \definecolor{green} {rgb}{0.50,0.90,0.500}
  \definecolor{red}   {rgb}{0.90,0.50,0.500}

  \tikzset{Step/.style= {circle,inner sep=0em,align=center,fill=purple,draw=black,line width=0.2mm,blur shadow={shadow blur steps=5}}}
  \tikzset{Step2/.style={circle,inner sep=0em,align=center,fill=purple,draw=black,line width=0.2mm,blur shadow={shadow blur steps=5,dashed}}}
  \tikzset{Decision/.style={diamond,aspect=1.5,inner sep=0em,align=center,fill=orange,draw=black,line width=0.2mm,blur shadow={shadow blur steps=5}}}

  \path
  (0.0,4.0) node[Step,minimum size=14mm,fill=purple] (fml) {\MITL\\$\varphi$}
  (2.0,4.0) node[Step,minimum size=14mm,fill=orange] (neg) {\MITL\\$\neg\varphi$}
  (4.0,4.0) node[Step,minimum size=14mm,fill=orange] (sem) {Timed\\Auto.\\$\Sem{\neg\varphi}$}
  (0.0,2.0) node[Step,minimum size=14mm,fill=purple] (aut) {Timed\\Auto.\\$\TTT$}
  (4.0,2.0) node[Decision]                           (cnd) {$\TTT\cap\Sem{\neg\varphi}$\\$\stackrel{?}{=}\emptyset$}
  (6.5,2.0) node[Step,minimum size=9mm,fill=green]   (yes) {Yes}
  (4.0,0.0) node[Step,minimum size=9mm,fill=red]     (no)  {No};
  \draw[->,black,thick] (fml) -- (neg);
  \draw[->,black,thick] (neg) -- (sem);
  \draw[->,black,thick] (sem) -- (cnd);
  \draw[->,black,thick] (aut) -- (cnd);
  \draw[->,black,thick] (cnd) -- (yes);
  \draw[->,black,thick] (cnd) -- (no);
\end{tikzpicture}  }}
\caption{
  Model Checking Steps for \LTL\ and \MITL\ Formulas.
}\label{fig:mc}
\vspace{-7mm}
\end{figure}

The decision procedures for satisfiability and model checking of
{\MITL}, follow the automata theoretic approach.
%(\aref{fig:mc}). 
In
the automata theoretic approach to satisfiability or model checking,
logical specifications are translated into automata such that the
language recognized by the automaton is exactly the set of models of
the specification. In case of {\LTL}, this involves translating formulas to
{\BAlp}, and the model checking procedure is shown in \aref{fig:mc-ltl}. 
For {\MITL}, formulas are translated into {\TAlp}. 
Model checking {\TAlp} against {\MITL} specifications is
     schematically shown in \aref{fig:mc-mitl}. The specification
     $\varphi$ is negated, a {\TAls} $\Sem{\neg\varphi}$ for
     $\neg\varphi$ is constructed, and then one checks that the system
     represented as a {\TAls} ${\cal T}$ has an empty intersection
     with $\Sem{\neg\varphi}$.
%
%
%
%% The decision procedures for satisfiability and model checking of
%% {\MITL}, follow the automata theoretic approach.
%% %(\aref{fig:mc}). 
%% In
%% the automata theoretic approach to satisfiability or model checking,
%% logical specifications are translated into automata such that the
%% language recognized by the automaton is exactly the set of models of
%% the specification. In case of {\LTL}, this involves translating formulas to
%% {\BAlp}.
%% %, and the model checking procedure is shown in
%% %\aref{fig:mc-ltl}. 
%% For {\MITL}, formulas are translated into
%%      {\TAlp}. 
%% %Model checking {\TAlp} against {\MITL} specifications is
%% %     schematically shown in \aref{fig:mc-mitl}. The specification
%% %     $\varphi$ is negated, a {\TAls} $\Sem{\neg\varphi}$ for
%% %     $\neg\varphi$ is constructed, and then one checks that the system
%% %     represented as a {\TAls} ${\cal T}$ has an empty intersection
%% %     with $\Sem{\neg\varphi}$.
%% %
Since {\TAlp} are not closed under
complementation~\cite{94-TA,96-omega}, {\MITL} decision procedures
crucially rely on transforming specifications $\varphi$ (for the
satisfaction problem) and $\neg\varphi$ (for the model checking
problem) into negated normal form, \ie, one where all the negations
have been pushed all the way inside to only apply to
propositions. Using negated normal forms requires considering formulas
with the full set of boolean operators (both $\wedge$ and $\vee$) and
temporal operators (both $\UntilOp$ and its dual $\ReleaseOp$).

Unfortunately, the well known decision procedures for
{\MITL}~\cite{96-MITL} are \emph{incorrect}. This is because we show
that the semantics used for the $\ReleaseOp$ operator, which is lifted
from the semantics of $\ReleaseOp$ in {\LTL}, is not the dual of
$\UntilOp$ (see \Ex~\ref{ex:2vars}). Therefore, the {\TAlp}
constructed for the negated normal form of a formula in {\MITL}, is
not logically equivalent to the original formula.
% because the
%translation to {\TAlp} uses an incorrect semantics for
%$\ReleaseOp$. 
We give a new, correct semantics for $\ReleaseOp$
(\aref{def:mtl-newsem} in \aref{sec:release}). Defining the semantics
for $\ReleaseOp$ in {\MTL} is non-trivial because of the subtle interplay
of open and closed intervals. Our definition uses 3 quantified
variables (unlike 2, which is used in the semantics of $\UntilOp$ in
both {\LTL} and {\MTL}, and $\ReleaseOp$ in {\LTL} and the incorrect
definition for {\MTL}). We show that under fairly general syntactic
conditions, one cannot use 2 quantified variables to correctly define
the semantics of $\ReleaseOp$ in {\MTL}.

We present a new translation of {\MITL} formulas into {\TAlp} that
uses our new semantics (\aref{sec:alg}). We show, using our new
construction, that the complexity of the satisfiability and model
checking problems remains {\EXPSPACEcomp} as was previous
claimed~\cite{96-MITL}. {\MITLzi} is a fragment of {\MITL} that has
{\PSPACE} decision procedures for satisfiability and model
checking. Our last result (\aref{sec:MITLwi}) shows that {\MITLzi} can
be generalized. We introduce a new, richer fragment of {\MITL} that we
call {\MITLwi}, for which we prove that satisfiability and model
checking are both {\PSPACEcomp}.

Proofs for results about {\MITL} are subtle due to the presence of a
continuous time domain and topological aspects like open and closed
sets. This is evidenced by the fact that the errors we have exposed in
this paper, have remained undiscovered for over 20 years, despite many
researchers working on problems related to {\MITL}. Therefore, to gain
greater confidence in the correctness of our claims, we have formally
proved many of our results in {\PVS}~\cite{92-PVS}. The {\PVS} proof
objects can be downloaded from
\url{http://uofi.box.com/v/PVSProofsOfMITL}.

\subsubsection{Related Work.}
The complexity of satisfiability and model checking of {\MITL}
formulas was first considered in~\cite{96-MITL}. We show that the
decision procedures are unfortunately flawed because of the use of an
incorrect semantics for $\ReleaseOp$. A different translation of
{\MITL} to {\TAlp} is presented in~\cite{06-MITL2TA}. However, their
setting is restricted in that all intervals are closed, and all
signals are continuous from the right. Note that \aref{ex:2vars} and
\aref{thm:2vars} in our paper, both use signals that are not
continuous from right.  Therefore, their algorithm does not fix the
problem in~\cite{96-MITL}. Papers~\cite{13-MITL2ATA,14-MITL2ATA}
present decision procedures for an event-based semantics for {\MITL}
which associates a time with every event. State-based semantics,
considered here and in~\cite{96-MITL,06-MITL2TA}, is very
different. For example, in a signal where $p$ is only true in the
interval $\Closed{0,c}$, there is no time that can be ascribed to the
event when $p$ first becomes false. A recent survey of results
concerning {\MTL} and its fragments can be found
in~\cite{08-RecentMTL}. Finally, robust model checking of
coFlat-\MTL\ formulas with respect to sensor and environmental noise,
is considered in~\cite{08-robust-coFlatMTL}.

\section{Preliminaries}\label{sec:prelim}

\paragraph{Sets and Functions. }
The set of {\em natural}, {\em positive natural}, {\em real}, {\em
  positive real}, and {\em non-negative real} numbers are respectively
denoted by \Nat, \pNat, \Real, \pReal, and \nnReal.
For a set $A$, power set of $A$ is denoted by $\pset{A}$, Cartesian
product of sets $A$ and $B$ is denoted by $A\times B$.  Cardinality of
$A$ is denoted by $\Size{A}$.  The set of functions from $A$ to $B$ is
denoted by $A\goesto B$.
For a set $A$, we denote the fact that $a$ is an element of $A$ by the
notation $a\oftype A$.
If $A$ is a subset of $\Real$ then for any $\epsilon\oftype\nnReal$,
we define
$\infOBall^\epsilon\Paren{A}\defEQ\Brace{x\oftype\Real\filter\exists
  a\oftype A\SuchThat\Size{x-a}\leq\epsilon}$ to be the
$\epsilon$-ball around $A$.
Finally, for any two numbers $a,b\oftype\Real$, we define $a\Minos b$
to be $\max\Brace{a-b,0}$.
 
\paragraph{Intervals.}           
Every {\em interval} of real numbers is specified by a constraint of
the form $a\lhd_1 x\lhd_2 b$, where
$a\oftype\Real\cup\Brace{-\infty}$, $b\oftype\Real\cup\Brace{
  \infty}$, and $\lhd_1,\lhd_2\oftype\Brace{<,\leq}$.  Also, if
$a\notin\Real$ (or $b\not\in\Real$) then we require $\lhd_1=<$ (or
$\lhd_2=<$).  We use the usual notation $\Closed{a,b}$,
$\Opened{a,b}$, $\LopRcl{a,b}$, and $\LclRop{a,b}$ to denote {\em
  closed}, {\em open}, {\em left-open}, or {\em right-open} intervals.
The set of {\em intervals} and {\em non-negative intervals} over
\Real, are denoted by \allII\ and \nnII, respectively.
For any interval $\II$, we use $\iInf{\II}$ and $\iSup{\II}$ to
respectively denote {\em infimum} and {\em supremum} of $\II$; if
$\II$ is empty, $\iInf{\II} = \infty$ and $\iSup{\II} = -\infty$.
{\em Width} of an interval, denoted by $\Norm{\II}$, is defined to be
$\iSup{\II} - \iInf{\II}$. Thus the width of the empty interval is
$-\infty$.
Finally, an interval with only one element is called a {\em
  singleton}; the width of such an interval (by the above definition)
is $0$.

For any interval $\II\oftype\allII$, we use
$\IsRClosed{\II}\defEQ\iSup{\II}\not\in\Real\vee\iSup{\II}\in\II$ to
check if $\II$ is closed from right.  Similarly, we use
$\IsLClosed{\II}$ and $\IsLOpen{\II}$ to check if $\II$ is closed/open
from left.  We use
$\Open{\II}\defEQ\II\setminus\Brace{\iInf{\II},\iSup{\II}}$ to denote
the interval which is achieved after removing infimum and supremum of
\II\ from it.  We also use the following intervals: 
$\CloseL{\II} \defEQ \II \cup \Brace{\iInf{\II}}$; 
$\OpenRCloseL{\II} \defEQ (\II \cup \Brace{\iInf{\II}}) \setminus \Brace{\iSup{\II}}$; and 
$\OpenLCloseR{\II} \defEQ (\II \cup \Brace{\iSup{\II}}) \setminus \Brace{\iInf{\II}}$.
 
\paragraph{Signal.}
Throughout this paper, $\AP$ is a non-empty set of {\em atomic
  propositions}~\footnote{In \aref{sec:2vars} and \aref{ex:fvar-nnf},
  we require $\Size{\AP}>1$.}.  {\em Signal} is any function of type
$\nnReal\goesto\pset{\AP}$.  Therefore, each signal is function that
defines the set of atomic propositions that are true at each instant
of time.
For a signal $\Sgn$ and time point $r\oftype\nnReal$, we define
$\Sgn^r\oftype\nnReal\goesto\pset{\AP}\oftype t\mapsto \Sgn(r+t)$ to
be another signal that shifts $\Sgn$ by $r$.

%--------------------------------------------------------------------------------------------------------------------------------------------------------------

\subsection{Metric Temporal Logic}\label{sec:mtl}
In this section, we first define the syntax of metric temporal logic
(\MTL) and its subclasses metric interval temporal logic (\MITL), and
metric temporal logic with restricted intervals (\MITLzi).  We then
define the current semantics of these logics from the literature and
call this the {\em old} semantics.  Finally, we define the
transformation to a negated normal form (\nnf) and the finite
variability condition (\fvar) that are used in decision procedures
for these logics.

\begin{definition}[Syntax of \MTL, \MITL, and \MITLzi]\label{def:mtl-syn}
Syntax of a \MTL\ formula is defined using the following \BNF\ grammar, 
where by $p$ and $\II$, we mean an element of $\AP$ and $\nnII$.
\[
\varphi ::= 
  \top                            \filter
  \bot                            \filter
  p                             \filter
  \neg\varphi                     \filter
  \varphi\vee  \varphi            \filter
  \varphi\wedge\varphi            \filter
  \Until{\varphi}{\varphi}[\II] \filter
  \Release{\varphi}{\varphi}[\II]
\]
We assume $\neg$ has the highest precedence.  Syntax of a
\MITL\ formula is the same, except that singleton intervals cannot be
used. Finally, \MITLzi\ is the sublogic of \MITL\ where every interval
$\II$ appearing in a formula either has $\iInf{\II} = 0$ or
$\iSup{\II} = \infty$.
\end{definition}

\aref{def:mtl-sem}, gives the semantics of \MTL\ that was introduced
in~\cite{96-MITL} and is commonly used in the literature. Since
\MITL\ and \MITLzi\ are sublogics of \MTL, their semantics follow from
the semantics of \MTL. Later in \aref{ex:2vars}, we show that this is
not the right semantics because $\UntilOp$ and $\ReleaseOp$ are {\em
  not} duals of each other.  In \aref{def:mtl-newsem}, we introduce a
new semantics that fixes this problem.  We distinguish the two
semantics by putting words $\OLDrawstr$ and $\NEWrawstr$, in gray,
below the satisfaction relation $\Sat$ (\aref{def:mtl-sem} uses
\OSat\ and \aref{def:mtl-newsem} uses \NSat).

\begin{definition}[Semantics of \MTL]\label{def:mtl-sem}
Let $\Sgn\oftype\nnReal\goesto\pset{\AP}$ be an arbitrary signal.  For
a \MTL\ formula $\varphi$, satisfaction relation $\Sgn\OSat\varphi$ is
defined using the following inductive rules:
\[
\newcommand{\DEF}{\mbox{\ \iFF\ }}
\begin{array}{@{}lll@{}}
\Sgn\OSat\top                                 & & \mbox{is always true}\\
\Sgn\OSat\bot                                 & & \mbox{is always false}\\
\Sgn\OSat p                                   &\DEF&p\in\Sgn(0)\\
\Sgn\OSat\neg\varphi                          &\DEF&\neg\Paren*{\Sgn\OSat\varphi}\\
\Sgn\OSat\varphi_1\vee  \varphi_2             &\DEF&\Paren*{\Sgn\OSat\varphi_1}\vee  \Paren*{\Sgn\OSat\varphi_2}\\
\Sgn\OSat\varphi_1\wedge\varphi_2             &\DEF&\Paren*{\Sgn\OSat\varphi_1}\wedge\Paren*{\Sgn\OSat\varphi_2}\\
\Sgn\OSat\Until{\varphi_1}{\varphi_2}[\II]    &\DEF&\exists t_1\oftype\II\SuchThat\Paren*{\Sgn^{t_1}\OSat\varphi_2}\wedge
                                                    \forall t_2\oftype\Paren{0,t_1}\WeHave\Sgn^{t_2}\OSat\varphi_1\\
\Sgn\OSat\Release{\varphi_1}{\varphi_2}[\II]  &\DEF&\forall t_1\oftype\II\WeHave\Sgn^{t_1}\OSat\varphi_2\ \ \ \vee\\
                                              &    &\exists t_1\oftype\pReal\SuchThat\Paren*{\Sgn^{t_1}\OSat\varphi_1}\wedge
                                                    \forall t_2\oftype\Braket{0,t_1}\cap\II\WeHave\Sgn^{t_2}\OSat\varphi_2
\end{array}
\]
Finally, $\Sgn\ONSat\varphi$ is defined to be
$\neg\Paren*{\Sgn\OSat\varphi}$.
\end{definition}

The decision procedures for satisfiability and model checking of
\MITL\ introduced in~\cite{96-MITL}, rely on translating the formulas
into timed automata. Since timed languages are not closed under
complementation~\cite{94-TA}, complementation cannot be handled as a
first-class operation. Instead, one constructs an equivalent formula,
where the negations are pushed all the way inside to only apply to
propositions. We present this definition of the negation normal form
(\aref{def:nnf}) of a \MTL\ formula next. The implicit assumption is
that a formula is semantically equivalent to its negation normal form
for certain special signals that are said to be \emph{finitely
  variable}. We will define finite variability after presenting the
definition of negation normal form.

\begin{definition}[Negated Normal Form]\label{def:nnf}
For any \MTL\ formula $\varphi$, its negated normal form, denoted by
$\nnf[\varphi]$, is a formula that is obtained by pushing all the
negations inside operators. It is formally defined using the following
inductive rules ($p\oftype\AP$ is an atomic formula, and $\varphi_1$
and $\varphi_2$ are arbitrary \MTL\ formulas):
\[
\scalebox{0.895}{$%
\begin{array}{@{}l@{}}
\begin{array}{@{}lll@{\ \ \ \ \ \ \ }lll@{\ \ \ \ \ \ \ }lll@{\ \ \ \ \ \ \ }lll@{}}
\nnf[\top]           &\defEQ&\top &
\nnf[\neg\top]       &\defEQ&\bot &
\nnf[\phantom{\neg}p]&\defEQ&\phantom{\neg} p 
\\
\nnf[\bot]    &\defEQ&\bot  & 
\nnf[\neg\bot]&\defEQ&\top &
\nnf[\neg p]  &\defEQ&\neg p 
&
\nnf[\neg \neg \varphi]&\defEQ&\nnf[\varphi]
\end{array}\\
\begin{array}{@{}lll@{\ \ \ \ \ \ }lll@{}}
\nnf[\varphi_1\vee  \varphi_2]&\defEQ&\nnf[\varphi_1]\vee  \nnf[\varphi_2] &
\nnf[\Until  {\varphi_1}{\varphi_2}[\II]]&\defEQ&\Until  {\nnf[\varphi_1]}{\nnf[\varphi_2]}[\II] 
\\
\nnf[\varphi_1\wedge\varphi_2]&\defEQ&\nnf[\varphi_1]\wedge\nnf[\varphi_2]	&
\nnf[\Release{\varphi_1}{\varphi_2}[\II]]&\defEQ&\Release{\nnf[\varphi_1]}{\nnf[\varphi_2]}[\II]
\\
\nnf[\neg\Paren{\varphi_1\vee  \varphi_2}]&\defEQ&\nnf[\neg\varphi_1]\wedge\nnf[\neg\varphi_2] &
\nnf[\neg\Paren{\Until  {\varphi_1}{\varphi_2}[\II]}]&\defEQ&\Release{\nnf[\neg\varphi_1]}{\nnf[\neg\varphi_2]}[\II] 
\\
\nnf[\neg\Paren{\varphi_1\wedge\varphi_2}]&\defEQ&\nnf[\neg\varphi_1]\vee  \nnf[\neg\varphi_2] &
\nnf[\neg\Paren{\Release{\varphi_1}{\varphi_2}[\II]}]&\defEQ&\Until  {\nnf[\neg\varphi_1]}{\nnf[\neg\varphi_2]}[\II]
\end{array}
\end{array}$}
\]
\end{definition}

The semantics of the modal operators $\UntilOp$ and $\ReleaseOp$ are
defined using quantifiers, and both of them are $\exists\forall$
formulas. However, $\UntilOp$ and $\ReleaseOp$ are supposed to be
duals of each other (see \aref{def:nnf}) eventhough they are defined
using formulas with the same quantifier alternation. Thus, $\UntilOp$
and $\ReleaseOp$ work as duals only for special signals that are
\emph{finitely variable}~\cite{96-MITL,04-Logics4RT,08-RecentMTL}.
Intuitively, it means during any finite amount of time, number of
times a signal changes its value is finite.  \aref{def:fvar}
formalizes this condition.

\begin{definition}[Finite Variability]\label{def:fvar}
For an implicitly known satisfaction relation $\Sat$, a signal
$\Sgn\oftype\nnReal\goesto\pset{\AP}$ is said to be \emph{finitely
  variable from right with respect to a \MTL\ formula $\varphi$},
denoted by $\fvarR[\Sgn,\varphi]$, \iFF\
\[ 
\begin{array}{r@{}l}
\forall r\oftype\nnReal\WeHave& \Paren*{\forall\epsilon\oftype\pReal\WeHave  \exists t\oftype\Paren{r,r+\epsilon}\SuchThat\Sgn^t\Sat\varphi} \Rightarrow \\
                              & \Paren*{\exists\epsilon\oftype\pReal\SuchThat\forall t\oftype\Paren{r,r+\epsilon}\WeHave  \Sgn^t\Sat\varphi}
\end{array}
\] 
$\Sgn$ is said to be \emph{finitely variable from left with respect to
  a \MTL\ formula $\varphi$}, denoted by $\fvarL[\Sgn,\varphi]$, \iFF\
\[ 
\begin{array}{r@{}l}
\forall r\oftype\pReal\WeHave& \Paren*{\forall\epsilon\oftype\Paren{0,r}\WeHave  \exists t\oftype\Paren{r-\epsilon,r}\SuchThat\Sgn^t\Sat\varphi} \Rightarrow \\
                             & \Paren*{\exists\epsilon\oftype\Paren{0,r}\SuchThat\forall t\oftype\Paren{r-\epsilon,r}\WeHave  \Sgn^t\Sat\varphi}
\end{array}
\] 
$\Sgn$ is said to be \emph{finitely variable with respect to a
  \MTL\ formula $\varphi$}, denoted by $\fvar[\Sgn,\varphi]$,
\iFF\ $\fvarL[\Sgn,\varphi]\wedge\fvarR[\Sgn,\varphi]$.
$\Sgn$ is said to be \emph{finitely variable (from left/right)}
\iFF\ for any \MTL\ formula $\varphi$, $\Sgn$ is finitely variable
(from left/right) with respect to $\varphi$.
Whenever we use finite variability, precise definition of $\Sat$ will
be clear from the context.
\end{definition}

Finite variability as defined here (\aref{def:fvar}), is formulated
differently than the definition given
in~\cite{08-RecentMTL,04-Logics4RT}. However, the two definitions are
equivalent, and we prefer the presentation given here because it makes
the quantifier alternation in the definition explicit.

\aref{def:fvar} suggests that to establish finite variability of a
signal, we need to consider all possible \MTL\ formulas. However, it
is known that a signal is finitely variable \iFF\ it is finitely
variable over all atomic formulas; we will prove that this observation
also holds for the new semantics for $\ReleaseOp$ that we define in
the next section (\aref{lem:fvar}).

Every finitely variable signal can be specified using (finite or
countably infinite) sequence of intervals paired with subsets of
atomic propositions that are true during that interval. For example,
$\Paren{\Closed{0,1},\Brace{p}}, \Paren{\Opened{1,4},\Brace{q}},
\Paren{\LclRop{4,\infty},\Brace{p,q}}$ specifies a signal that is
$\Brace{p}$ during $\Closed{0,1}$, $\Brace{q}$ during $\Opened{1,4}$,
and $\Brace{p,q}$ at all other times. All our examples use this
representation for (finitely variable) signals.

Equivalence for formulas in \MTL\ will only be considered with respect
to finitely variable signals. That is, two \MTL\ formulas $\varphi_1$
and $\varphi_2$ are said to be {\em equivalent}, \iFF\ for any
finitely variable signal \Sgn\ we have
$\Paren{\Sgn\Sat\varphi_1}\Leftrightarrow\Paren{\Sgn\Sat\varphi_2}$;
here $\Sat$ can either be taken to be the relation defined in
\aref{def:mtl-sem} or the one we will define in the next section
(\aref{def:mtl-newsem}).

\section{Defining the Semantics of Release}\label{sec:release}

The semantics of release as defined in \aref{def:mtl-sem} does not
ensure that $\ReleaseOp$ and $\UntilOp$ are duals. \Ex~\ref{ex:2vars}
describes a finite variable signal $\Sgn$ such that
$\Sgn\ONSat\Until{p}{q}[\II]$ (for propositions $p,q$ and interval
$\II$) and $\Sgn\ONSat\Release{\neg p}{\neg q}[\II]$. Thus, the
transformation to negation normal form described in \aref{def:nnf}
does not preserve the semantics, making decision procedures for
satisfiability and model checking outlined in~\cite{96-MITL}
incorrect. In this section, we identify the correct semantics of the
release operator so that the transformation to negation normal form
described in \aref{def:nnf} is semantically correct. Our semantics for
$\ReleaseOp$ is more complicated than the one in \aref{def:mtl-sem},
in that uses 3 quantified variables. We conclude this section by
establishing that this increase in expression complexity is necessary
--- it is impossible to define the semantics of $\ReleaseOp$ using a
$\exists\forall$ formula that uses only two quantified variables.

%\vskip\baselineskip
\noindent
\begin{minipage}{0.65\textwidth}
\begin{example}\label{ex:2vars}
{
Let $c\oftype\Open{\II}$ be an arbitrary point, and define $\Sgn$ to be the signal 
$\Paren*{\Braket{0,c},\Brace{p}},\allowbreak\Paren*{\Paren{c,\infty},\Brace{q}}$.  
Clearly, $\Sgn\ONSat\Until{p}{q}[\II]$ and hence $\Sgn\OSat\neg\Paren*{\Until{p}{q}[\II]}$.  
On the other hand, $\neg q$ is not true throughout $\II$ and whenever $\neg p$ is true, $\neg q$ is false.  
Therefore, $\Sgn\ONSat\Release{\neg p}{\neg q}[\II]$. 
Thus, \aref{def:nnf} does not preserve 
\unskip\parfillskip 0pt \par
}
%the semantics,
%making decision procedures for satisfiability and model
%checking~\cite{96-MITL} of \MITL\ that first convert a formula into
%negation normal form, incorrect.
%\vskip-0.1\baselineskip
\end{example}
\end{minipage}
\hfill
\begin{minipage}{0.32\textwidth}
\vskip-1.5\baselineskip
\begin{figure}[H]
\scalebox{0.865}{\begin{tikzpicture}[>=stealth']
  \definecolor{orange1}{rgb}{1.000,0.95,0.914}
  \definecolor{orange2}{rgb}{0.988,0.56,0.388}
  \definecolor{blue}   {rgb}{0.075,0.40,0.600}

  \tikzset{Inclusive/.style= {circle,fill=black,draw=black,line width=0mm,inner sep=.2em,blur shadow={shadow blur steps=5}}}
  \tikzset{Exclusive/.style= {circle,           draw=black,line width=0mm,inner sep=.2em,blur shadow={shadow blur steps=5}}}
  \tikzset{Interval/.style={inner sep=0em,align=center,fill=purple,draw=black,line width=0.2mm,blur shadow={shadow blur steps=5,dashed}}}

  \fill[fill=orange1] (1,-0.5) rectangle (2.8,1.5);
  \draw[color=orange2,dashed,thick] (1.0,-0.5) -- (1.0,1.5);
  \draw[color=orange2,dashed,thick] (2.8,-0.5) -- (2.8,1.5);
  \node at (1.9,1.35) {Interval \II};

  \path
  (0.0,0.6) node[Inclusive]           (zero) {}
  (2.0,0.6) node[Inclusive,label=$c$] (c1)   {}
  (2.0,0.0) node[Exclusive]           (c2)   {}
  (3.5,0.0) coordinate                (inf1) {}
  (4.5,0.0) coordinate                (inf2) {}
  ;

  \path[black,thick,text=blue]
  (zero) edge node [near start,above] {$p,\neg q$} (c1)
  (c2)   edge node [near end,  above] {$\neg p,q$} (inf1)
  ;

  \path[black,thick,dashed]
  (inf1) edge node [near end, above] {$\infty$} (inf2)
  ;
\end{tikzpicture}}
\vskip-\baselineskip
\caption{Signal \Sgn}\label{fig:2vars-sig}
\end{figure}
\end{minipage}
\vskip-0.3\baselineskip

\noindent
the semantics,
making decision procedures for satisfiability and model
checking~\cite{96-MITL} of \MITL\ that first convert a formula into
negation normal form, incorrect.

Since the semantics of release is incorrect (from the perspective of
ensuring that $\UntilOp$ and $\ReleaseOp$ are duals), we define a new
semantics for the release operator. Denseness of the time domain,
along with subtleties introduced to due to open and closed endpoints
of intervals, make proofs about \MTL\ challenging to get
right. Therefore, to have greater confidence in our results, we have
proved most of our results in Prototype Verification Systems
(\PVS)~\cite{92-PVS}. We explicitly mark all lemmas and theorems that
were proved in \PVS~\footnote{Each such result is annotated by
  $\langle\textsf{lemma-name}\rangle@\langle\textsf{theory-name}\rangle$.
  Theory name \textsf{thry} can be found in a file named
  \textsf{thry.pvs}.}. Space limitations prevent these formal proofs
to be part of this paper. However they can be downloaded from
\url{http://uofi.box.com/v/PVSProofsOfMITL}.

\begin{definition}[New Semantics for \MTL]\label{def:mtl-newsem}
Let $\Sgn\oftype\nnReal\goesto\pset{\AP}$ be an arbitrary signal and
$r\oftype\nnReal$ be an arbitrary point in time.  For a \MTL\ formula
$\varphi$, we define the satisfaction relation
$\Sgn\NSat\varphi$ as follows.
\[
\newcommand{\DEF}{\mbox{\ \iFF\ }}
\begin{array}{@{}lll@{}}
\Sgn\NSat\top                                 & & \mbox{is always true}\\
\Sgn\NSat\bot                                 & & \mbox{is always false}\\
\Sgn\NSat p                                   &\DEF&p\in\Sgn(0)\\
\Sgn\NSat\neg\varphi                          &\DEF&\neg\Paren*{\Sgn\NSat\varphi}\\
\Sgn\NSat\varphi_1\vee  \varphi_2             &\DEF&\Paren*{\Sgn\NSat\varphi_1}\vee  \Paren*{\Sgn\NSat\varphi_2}\\
\Sgn\NSat\varphi_1\wedge\varphi_2             &\DEF&\Paren*{\Sgn\NSat\varphi_1}\wedge\Paren*{\Sgn\NSat\varphi_2}\\
\Sgn\NSat\Until{\varphi_1}{\varphi_2}[\II]    &\DEF&\exists t_1\oftype\II\SuchThat\Paren*{\Sgn^{t_1}\NSat\varphi_2}\wedge
                                                    \forall t_2\oftype\Paren{0,t_1}\WeHave\Sgn^{t_2}\NSat\varphi_1\\
\Sgn\NSat\Release{\varphi_1}{\varphi_2}[\II]  &\DEF&\forall t_1\oftype\II\WeHave\Sgn^{t_1}\NSat\varphi_2\ \ \ \vee\\
                                              &    &\exists t_1\oftype\pReal\SuchThat\Paren*{\Sgn^{t_1}\NSat\varphi_1}\wedge
                                                    \forall t_2\oftype\Braket{0,t_1}\cap\II\WeHave\Sgn^{t_2}\NSat\varphi_2\ \ \ \vee\\
                                              &    &\exists t_1\oftype\CloseL{\II},t_2\oftype\II\cap\Paren{t_1,\infty}\SuchThat
                                                    \forall t_3\oftype\II\WeHave\Paren{t_3\leq t_1\Implies\Sgn^{t_3}\NSat\varphi_2}\\
                                              &    &\hfill\wedge\Paren{t_1 < t_3\leq t_2\Implies\Sgn^{t_3}\NSat\varphi_1}
\end{array}
\]
$\Sgn\NNSat\varphi$ is defined to be $\neg\Paren*{\Sgn\NSat\varphi}$.
\end{definition}

\begin{example}\label{ex:2vars-cont}
Consider the signal $\Sgn$ from \Ex~\ref{ex:2vars} that does not
satisfy $\Until{p}{q}[\II]$. Observe that $\Sgn\NSat\Release{\neg p}{\neg q}[\II]$ by meeting the third condition for satisfying
release operators under the new semantics as follows. Take $t_1 = c$,
and $t_2 = c+\epsilon$ such that $\Braket{c,c+\epsilon} \subseteq
\II$. Now, for any $t_3 \leq t_1$, $\Sgn^{t_3}\NSat \neg q$, and for
any $t_1 < t_3 \leq t_2$, $\Sgn^{t_3}\NSat \neg p$.
\end{example}

We will show that the new semantics (\aref{def:mtl-newsem}) ensures
that the transformation to negation normal form (\aref{def:nnf})
preserves the semantics when considering finite variable
signals. Before presenting this result (\aref{thm:duality}), we recall
that a signal is finitely variable iff the truth of \emph{every}
formula in the logic changes only finitely many times within any
bounded time. This is difficult to establish. Instead,
in~\cite{96-MITL}, it was shown that proving the finite variablity of
a signal with respect to atomic propositions, guarantees its finite
variability with respect to all formulas. We show that such an
observation is also true for the new semantics we have defined.

\PVSaddress{%  
  \PVSLabel*{mtl}{atom\_finitevar\_eqv\_fml\_finitevar\_left} and
\PVSLabel{mtl}{atom\_finitevar\_eqv\_fml\_finitevar\_right}}
\newcommand{\RESfvar}{%
Using the semantics in \Def~\ref{def:mtl-newsem}, for any signal $\Sgn$, the following conditions hold:
\[
\begin{split}
  \Paren*{\forall p\oftype\AP\WeHave\fvarL[\Sgn,p]}\Leftrightarrow\Paren*{\forall \varphi\oftype\MTL\WeHave\fvarL[\Sgn,\varphi]}\\
  \Paren*{\forall p\oftype\AP\WeHave\fvarR[\Sgn,p]}\Leftrightarrow\Paren*{\forall \varphi\oftype\MTL\WeHave\fvarR[\Sgn,\varphi]}
\end{split}
\]}
\begin{pvslemma}[Finite Variability]\label{lem:fvar}
\RESfvar
\end{pvslemma}

We now present the main result about the correctness of the new
semantics.
\PVSaddress{\PVSLabel{mtl}{sat\_eqv\_nnfsat}}
\newcommand{\RESnnfEqvSat}{%
  If a signal $\Sgn$ is finitely variable from right then for any \MTL\ formula $\varphi$, $\Sgn\NSat\varphi$ \iFF\ $\Sgn\NSat\nnf[\varphi]$.}
\begin{pvstheorem}[Duality]\label{thm:duality}
  \RESnnfEqvSat
\end{pvstheorem}

We conclude this section by introducing a new (defined) temporal
operator that we will use. For any \MTL\ formula $\varphi$, let
$\ltlN\varphi$ be defined as
$\Release{\varphi}{\varphi}[\Paren{0,\infty}]$.  Intuitively,
$\Sgn\NSat\ltlN\varphi$ \iFF\ $\varphi$ becomes true and stays true
for some positive amount of time.  \Prop~\ref{prop:next} formalizes
this observation.  Note that instead if $\infty$ in definition of
$\ltlN\varphi$, one can use any other positive number and obtain an
equivalent
formula~\footnote{\label{ft:next}\PVSLabel*{mtl}{next\_def\_1},
  \PVSLabel{mtl}{next\_def\_3}}.  In writing formulas, we assume
$\ltlN$ has higher precedence than $\UntilOp$ and $\ReleaseOp$
operators but lower precedence than $\vee$ and $\wedge$ operators.

\PVSaddress*{ft:next}
\newcommand{\RESnext}{%
  Let $\Sat$ be the satisfaction relation given in \Def~\ref{def:mtl-sem} or \Def~\ref{def:mtl-newsem}.
  For any signal $\Sgn$ we have
  $\Sgn\Sat\ltlN\varphi$ \iFF\
  $\exists\epsilon\oftype\pReal\SuchThat\forall t\oftype\Paren{0,\epsilon}\WeHave\Sgn^t\Sat\varphi$.}
\begin{pvsproposition}[Operator \ltlN]\label{prop:next}
\RESnext
\end{pvsproposition}

The correctness of our semantics (\aref{thm:duality}) was only
established for signals that were finitely variable from the
right. Unfortunately, our next example shows that this assumption
cannot be relaxed. 

\begin{example}\label{ex:fvar-nnf}
Let $\varphi$ be the following formula. 
\[
\Paren*{\ltlN q} \wedge \neg\Paren*{\Until{p}{q}[\Paren{0,1}]} \wedge
\neg\Paren*{\Until{\neg p}{q}[\Paren{0,1}]}
\]
$\varphi$ is satisfied by a signal that is finitely variable from the
left as follows. Consider $\Sgn$ to be such that $q$ is true at all
times, and $p$ is true at times $t = \frac{1}{2^n}$ for $n \in \Nat$
and false at all other times. First observe that $\Sgn$ is finitely
variable from the left and $\Sgn\Sat\varphi$, no matter whether $\Sat$
is given by \aref{def:mtl-sem} or by \aref{def:mtl-newsem}.

Putting $\varphi$ into its \NNF\ we obtain the following formula which
is not satisfiable (using either \Def~\ref{def:mtl-sem} or
\Def~\ref{def:mtl-newsem}).
\[
\Paren*{\ltlN q} \wedge \Paren*{\Release{\neg p}{\neg q}[\Paren{0,1}]}
\wedge \Paren*{\Release{p}{\neg q}[\Paren{0,1}]}
\]
\end{example}

\subsection{Necessity of Using Three Variables}\label{sec:2vars}

The new semantics of the release operator, given in
\aref{def:mtl-newsem}, is defined by quantifying over 3 time points. A
natural question to ask is whether this is necessary. Is there a
``simpler'' definition of the semantics of the release operation? In
this section, we show that this is in some sense impossible. We show
that no first order definition of the semantics of release that
quantifies over only two time points can be correct.

Let us fix the formula $\varphi = \neg\Paren{\Until{p}{q}[\II]}$,
where $p$ and $q$ are proposition. The goal is to show that $\neg
\varphi$ cannot be expressed by a ``simple''
$\exists\forall$-formula. Let us first define what we mean by
``simple'' formulas. Let ${\cal L}_{p,q,\II}$ be the collection of
first order formulas of the form
\begin{equation}
\label{eq:def-form}
  \bigwedge_{i\oftype\Brace{1,\ldots,n}}\bigvee_{j\oftype\Brace{1,\ldots,i_n}}
  \exists t_1\oftype\Real\SuchThat\forall t_2\oftype\Real\WeHave\phi_{i,j}\Paren*{\Sgn,t_1,t_2}
\end{equation}
Here $n\oftype\Nat$ and $i_n\oftype\Nat$, and formula $\phi_{i,j}$ is
given by the \BNF\ grammar
\[
        \phi  ::=
        \neg\phi
           \filter
        \phi\vee\phi
           \filter
        \alpha_1 t_1+\alpha_2 t_2\bowtie\beta   \filter
        \Sgn^{t_u}\Sat\psi
\]
where $\alpha_1,\alpha_2,\beta\oftype\Real$ are arbitrary constants,
$\bowtie\oftype\Brace{<,\leq,>,\geq}$ is an arbitrary relation symbol,
$u\oftype\Brace{1,2}$, and $\psi\oftype\Brace{p,q,\neg p, \neg
  q}$. We assume $\Sat$ here is either $\OSat$ or $\NSat$; it doesn't
make a difference because $\psi$ is propositional. We call constraints
of the form $\alpha_1 t_1+\alpha_2 t_2\bowtie\beta$ \emph{domain}
constraints and constraints of the form $\Sgn^{t_u}\Sat\psi$
\emph{signal} constraints.

Before presenting the main theorem of this section, we examine the
restrictions imposed on formulas in ${\cal L}_{p,q,\II}$.  The
requirements that $\phi \in {\cal L}_{p,q,\II}$ be in conjuctive
normal form, or that there be no $\vee$ or $\wedge$ operations between
quantifiers, or that $\psi$ in the \BNF\ only be $\Brace{p,q,\neg
  p,\neg q}$ do not restrict the expressive power. Any formula not
satifying these conditions can be transformed into one that does. The
main restrictions are that all domain constraints are linear and that
$\Sgn$ in the signal constraints only be shifted by $t_1$ or $t_2$ and
not by an arithmetic combination of them.

\begin{theorem}\label{thm:2vars}
There is no formula in ${\cal L}_{p,q,\II}$ that is logically
equivalent to $\neg\Paren{\Until{p}{q}[\II]}$ over finite variable
signals. In fact, for any $\phi \in {\cal L}_{p,q,\II}$, there are
signals $\Sgn_1$ and $\Sgn_2$ in which the truth of any atomic
proposition changes at most 2 times such that
$\Sgn_1\Sat\neg\Paren{\Until{p}{q}[\II]}$,
$\Sgn_2\Sat\Paren{\Until{p}{q}[\II]}$ but either both $\Sgn_1,\Sgn_2$
satisfy $\phi$ or neither does.
\end{theorem}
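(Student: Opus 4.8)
The plan is to prove the stronger ``in fact'' statement directly. Fix $\phi\oftype{\cal L}_{p,q,\II}$. I will choose a suitably generic $c\oftype\Open{\II}$ and use the two signals of \aref{ex:2vars} built around $c$, namely $\Sgn_1\defEQ\Paren*{\Braket{0,c},\Brace{p}},\Paren*{\Paren{c,\infty},\Brace{q}}$, which satisfies $\neg\Paren{\Until{p}{q}[\II]}$ exactly as shown there, and $\Sgn_2\defEQ\Paren*{\LclRop{0,c},\Brace{p}},\Paren*{\LclRop{c,\infty},\Brace{q}}$, which satisfies $\Until{p}{q}[\II]$ with witness $t_1=c\oftype\II$. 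Each proposition changes truth at most once in either signal, and $\Sgn_1,\Sgn_2$ agree everywhere except at the single point $c$, where $\Sgn_1(c)=\Brace{p}$ and $\Sgn_2(c)=\Brace{q}$. Hence it suffices to show that, for a generic $c$, no $\phi\oftype{\cal L}_{p,q,\II}$ separates $\Sgn_1$ from $\Sgn_2$. Writing $\phi=\bigwedge_i\bigvee_j A_{i,j}$ with $A_{i,j}=\exists t_1\SuchThat\forall t_2\WeHave\chi_{i,j}$, agreement of every block $A_{i,j}$ on $\Sgn_1,\Sgn_2$ forces agreement of $\phi$, so I reduce to a single block $A=\exists t_1\SuchThat\forall t_2\WeHave\chi$.

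Next I reduce to an interval of parameters. After substituting the piecewise-constant signals, each signal constraint of $\chi$ becomes a comparison of its time argument with $c$; so, eliminating the quantifiers $t_1,t_2$ over linear real arithmetic, the truth values of ``$\Sgn_1\Sat A$'' and ``$\Sgn_2\Sat A$'' are finitely piecewise constant (semilinear) functions of $c$. Thus, if $A$ separated $\Sgn_1$ from $\Sgn_2$ for every $c\oftype\Open{\II}$, then after one case split there would be a subinterval $J\subseteq\Open{\II}$ on which, say, $\Sgn_1\Sat A$ and $\Sgn_2\nSat A$ throughout; the reverse case is symmetric. From this I aim for a contradiction.

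The heart is a witness-transfer argument. From $\Sgn_1\Sat A$, fix a witness $t_1^{\star}$, so $\chi\Paren{\Sgn_1,t_1^{\star},t_2}$ holds for all $t_2$; then choose $c\oftype J$ with $t_1^{\star}\neq c$ and with no domain constraint $\alpha_1t_1+\alpha_2t_2\bowtie\beta$ of $\chi$ tight at $\Paren{t_1^{\star},c}$ (possible since $J$ is uncountable while each of these excludes a single value of $c$). I claim $t_1^{\star}$ also witnesses $\Sgn_2\Sat A$, contradicting $\Sgn_2\nSat A$: for $t_2\neq c$ the signals agree at $t_2$ and (as $t_1^{\star}\neq c$) at $t_1^{\star}$, so $\chi\Paren{\Sgn_2,t_1^{\star},t_2}=\chi\Paren{\Sgn_1,t_1^{\star},t_2}$, which holds; and for $t_2=c$, on a short interval $\Paren{c-\epsilon,c+\epsilon}$ the $t_1$-signal constraints are constant, the domain constraints keep their truth values (none is tight at $t_2=c$), and the $t_2$-signal constraints of $\Sgn_2$ are constant on $\Closed{c,c+\epsilon}$ because $\Sgn_2$ equals $\Brace{q}$ throughout $\LclRop{c,\infty}$, whence $\chi\Paren{\Sgn_2,t_1^{\star},c}=\chi\Paren{\Sgn_2,t_1^{\star},c+\epsilon/2}=\chi\Paren{\Sgn_1,t_1^{\star},c+\epsilon/2}$, which holds. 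The reverse direction runs the same way, approaching $c$ from the left and using that $\Sgn_1$ equals $\Brace{p}$ throughout $\Braket{0,c}$.

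The step I expect to be the main obstacle is securing a \emph{non-critical} witness, i.e.\ one with $t_1^{\star}\neq c$ and with no domain constraint tight at $\Paren{t_1^{\star},c}$. Over $J$, the witness set of $\Sgn_1\Sat A$ is semilinear in $\Paren{c,t_1}$; if all its witnesses were critical for every $c\oftype J$ it would be confined to the finitely many affine loci $t_1=c$ and $t_1=\Paren{\beta-\alpha_2c}/\alpha_1$ (one per domain constraint) yet still project onto $J$, so after shrinking $J$ it would agree there with a single such locus $t_1=\ell(c)$. Substituting the affine $t_1=\ell(c)$ into $\chi$, one analyses how $\forall t_2\WeHave\chi\Paren{\Sgn_1,\ell(c),t_2}$ can hold while it fails at every nearby $t_1$: for the diagonal locus this forces a domain-constraint threshold to coincide with $c$ identically in $c$, i.e.\ a constraint equivalent to a comparison $t_1\bowtie t_2$, which a further genericity choice of $c$ (also avoiding the few parameters where such a tangency occurs) excludes; the remaining, genuinely delicate case is a witness forced onto the critical line of a non-strict domain constraint, for which I expect the full proof to need a substantially more careful argument, possibly exploiting the second allowed change in the signals.
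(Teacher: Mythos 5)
Your choice of signals is where the argument breaks, and the breakage is not confined to the ``genuinely delicate case'' you defer at the end --- it is fatal to the whole plan. Your $\Sgn_1$ and $\Sgn_2$ differ only at the single boundary point $c$, and such a pair \emph{can} be separated by a two-variable $\exists\forall$ block in ${\cal L}_{p,q,\II}$, for every $c$. Concretely, consider
\[
B \;\defEQ\; \exists t_1\oftype\Real\SuchThat\forall t_2\oftype\Real\WeHave \Paren[\big]{\Sgn^{t_1}\Sat p}\wedge\Paren[\big]{\,t_2\leq t_1\ \vee\ \Sgn^{t_2}\Sat q\,},
\]
which is a legal block ($t_2\leq t_1$ is a linear domain constraint, and $\wedge$ is derivable from $\neg,\vee$). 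For every $c$ we have $\Sgn_1\Sat B$ with witness $t_1=c$ (since $p$ holds at $c$ and $q$ holds throughout $\Opened{c,\infty}$), whereas $\Sgn_2\nSat B$ (any $t_1$ with $p$ true has $t_1<c$, and then $q$ fails at $t_2=(t_1+c)/2$). So $B$ separates your pair in exactly the direction your witness-transfer must exclude, and it does so for \emph{every} $c$: genericity in $c$ cannot help because the offending constraint $t_2\leq t_1$ does not mention $c$, and the unique witness is pinned to $t_1=c$ by the signal constraints alone. (Dually, $\Until{p}{q}[\II]$ itself is a block of ${\cal L}_{p,q,\II}$ separating the pair in the other direction.) Since $B$ classifies every member of your family $\Brace{\Sgn_1^c,\Sgn_2^c}$ exactly as $\neg\Paren{\Until{p}{q}[\II]}$ would, this family cannot even witness the inequivalence of $B$ (or of any $\phi$ containing $B$ as a block) with $\neg\Paren{\Until{p}{q}[\II]}$; no refinement of the witness-transfer argument can rescue signals that disagree only at one point.

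This is precisely the difficulty the paper's construction is built to avoid. There the two signals agree except on an \emph{interval} $\Paren{r,r+\epsilon}$, and the key technical ingredient is the width lemma (\aref{lem:rndelta}): after choosing $r$ away from the finitely many critical ordinates determined by the polyhedra of $\phi$, every vertical section of a polyhedron passing through height $r$ has width either $0$ or greater than $\delta$. Hence any $\forall t_2$-range that touches the perturbed region must also reach below $r$ or beyond $r+\delta$, where $\Sgn_1$ already carries the value the perturbation introduces --- that is what makes the witnesses for $\Sgn_1\Sat\phi$ transfer to $\Sgn^\epsilon_2$. A zero-width perturbation supplies no such slack, which is why your ``main obstacle'' (a witness forced onto $t_1=c$ by a constraint of the form $t_1\bowtie t_2$) is not a removable technicality but the exact point of failure. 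Your outer scaffolding (reduction to a single block, semilinearity in the parameter, picking a subinterval of parameters) is fine and broadly parallels the paper; to complete the proof you would need to replace the single switch point by a short disagreement interval and prove an analogue of \aref{lem:rndelta}, at which point you have essentially reconstructed the paper's argument.
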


The rest of the section is devoted to proving
\aref{thm:2vars}. Suppose (for contradiction)
\[
\phi = 
  \bigwedge_{i\oftype\Brace{1,\ldots,n}}\bigvee_{j\oftype\Brace{1,\ldots,i_n}}
  \exists t_1\oftype\Real\SuchThat\forall t_2\oftype\Real\WeHave\phi_{i,j}\Paren*{\Sgn,t_1,t_2}
\]
is logically equivalent to $\neg\Paren*{\Until{p}{q}[\II]}$. We begin
by observing that $\phi$ can be assumed to be in a special canonical
form. We then identify two parameters $r$ and $\delta$ that are used
in the construction of signals that demonstrate the inequivalence
of $\phi$ and $\neg\Paren*{\Until{p}{q}[\II]}$. Finally, we use these
parameters to construct the signals and prove the inequivalence.

\subsubsection{Cannonical Form of $\phi$.}
We can assume without loss of generality, that $\phi$ has the
following special form.
\begin{myenums}
\item \label{smpl:neg-free} Negations are pushed all the way inside,
  and are only applied to $p$ or $q$. This is always possible since
  $\Brace{<,\leq,>,\geq}$ is closed under negation and
  $\neg\Paren{\Sgn^t\Sat\psi}$ is, by definition, equivalent to
  $\Sgn^t\Sat\neg\psi$. Note that after this step, $\phi_{i,j}$
  may contain $\wedge$ operator.
\item \label{smpl:cnf} Each $\phi_{i,j}$ is a conjunction of clauses
  that we denote as $\phi_{i,j,k}$.
\item \label{smpl:S} Every clause in $\phi_{i,j}$, has at most one
  signal constraint of the form $\Sgn^{t_1}\Sat\psi_1$ and one signal
  constraint of the form $\Sgn^{t_2}\Sat\psi_2$ where $\psi_1$ and
  $\psi_2$ are boolean combinations of $p$ and $q$.
\item \label{smpl:P} For an arbitrary clause $\phi_{i,j,k}$ in
  $\phi_{i,j}$, let $S$ and $P$ be, respectively, the set of signal
  and \emph{negated} domain constraints in
  $\phi_{i,j,k}$. $\phi_{i,j,k}$ is equivalent to
  $\Paren*{\bigwedge_{\theta\oftype
      P}\theta}\Implies\Paren*{\bigvee_{s\oftype S}s}$. The left hand
  side of this implication defines a 2-dimensional convex polyhedron
  using variables $t_1$ and $t_2$.

  For the rest of the proof, \wLOG, we assume every clause in every
  $\phi_{i,j}$ is of the form $P\Implies S$, where $P$ is a polyhedron
  over $t_1$ and $t_2$, and $S$ is a disjunction of $0$, $1$, or $2$
  signal constraints. For any polyhedron $P$, we define
  $\Sem{P}\defEQ\Brace{\Paren{t_1,t_2}\filter P(t_1,t_2)}$ to be the
  set of points in $P$. Also, $\Closure{\Sem{P}}$ is defined to be the
  closure of $\Sem{P}$. Finally, let $\PP$ be the set of all polyhedra
  used in $\phi$.
\end{myenums}

\setlength\abovecaptionskip{0pt}%           
\setlength\belowcaptionskip{0pt}%
\begin{wrapfigure}{r}{29mm}%
\raggedleft%
\vspace{-1.9\baselineskip}%
\includegraphics[width=29mm]{./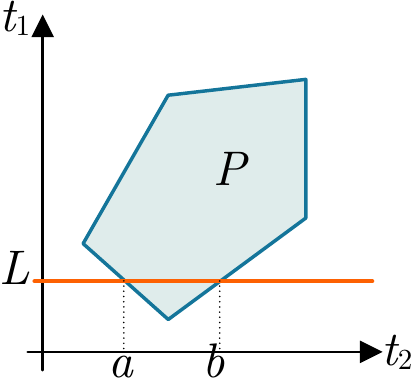}%
\caption{}\label{fig:conditional}%
\vspace{-2.0\baselineskip}%
\end{wrapfigure}%
\setlength\abovecaptionskip{\baselineskip}%
\setlength\belowcaptionskip{\baselineskip}%
\Fig~\ref{fig:conditional} shows a geometrical interpretation of the
polyhedral representation of clauses in $\phi_{i,j}$.  Let
$\phi_{i,j,k}$ be a clause that is specified by $P\Implies S$. An
arbitrary horizontal line $L$, may or may not have intersection with
$P$. Either way, $L$ {\em witnesses} $\exists t_1\SuchThat\forall
t_2\WeHave\phi_{i,j,k}$ \iFF\ for all points in this possibly empty
intersection, $S$ is satisfied. Every $\phi_{i,j}$ is a set of
constraints of the form $P\Implies S$. Therefore, $L$ {\em witnesses}
$\exists t_1\SuchThat\forall t_2\WeHave\phi_{i,j}$ \iFF\ $L$ witnesses
all clauses in $\phi_{i,j}$. Furthermore, $\exists t_1\SuchThat\forall
t_2\WeHave\phi_{i,j}$ is true \iFF\ there is a horizontal line $L$
that witnesses it.

\subsubsection{Identifying Parameters $\delta$ and $r$.}
For any $P\oftype\PP$, define $V_P$ to be the set of vertices of $P$,
and $L_P$ to be the set of points on \emph{vertical} edges of $P$
(that is, segments of a line of the form $t_2 = c$ for some $c\oftype\Real$).  Let
$C_1\defEQ\bigcup_{P\oftype\PP}\Paren{V_P\cup L_P}$. Define
$C_2\defEQ\Brace{t_2\oftype\Real\filter\exists
  t_1\oftype\Real\SuchThat\Paren{t_1,t_2}\in C_1}$ be the projection
of points in $C_1$ on $t_2$. Take $C_3\defEQ C_2$ if $\iSup{\II} =
\infty$, and $C_3 \defEQ C_2 \cup \Brace{\iSup{\II}}$, otherwise.
Observe that $C_3$ is always a finite set.
Therefore, for some $\epsilon\oftype\pReal$,
$\II\setminus\infOBall^\epsilon\Paren{C_3} \neq \emptyset$. Fix $r \oftype\II\setminus C_3$ such that for some $\epsilon$,
$\infOBall^\epsilon(r) \subseteq \II\setminus C_3$.

For any $P\oftype\PP$ and $c\oftype\Real$,
$\Closure{\Sem{P}}\cap\Sem{t_1\ =c}$ is equal to $\Brace{c}\times J$,
for some (possibly empty) interval $J$.  Define
$\Norm{\Closure{\Sem{P}}\cap\Sem{t_1=c}}$ to be $\Norm{J}$. The main
property we exploit about our choice of $r$, is that if $(r,c) \in J$
then $\Norm{J}$ is either $\leq 0$ or ``large''. This is the content
of our next lemma.
\begin{lemma}
\label{lem:rndelta}
There is a $\delta\oftype\pReal$ such that for any $P \oftype\PP$ and
$c\oftype\Real$, we have that if $\Paren{c,r}\in\Closure{\Sem{P}}$
then
$\Norm{\Closure{\Sem{P}}\cap\Sem{t_1=c}}\notin\LopRcl{0,\delta}$.
\end{lemma}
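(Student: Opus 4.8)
The plan is to handle the polyhedra one at a time. Since $\PP$ is finite it suffices to produce, for each $P\oftype\PP$, a value $\delta_P\oftype\pReal$ such that $\Paren{c,r}\oftype\Closure{\Sem{P}}$ forces $\Norm{\Closure{\Sem{P}}\cap\Sem{t_1=c}}\notin\LopRcl{0,\delta_P}$, and then take $\delta\defEQ\min_{P\oftype\PP}\delta_P>0$. Fix $P$, write $\bar P\defEQ\Closure{\Sem{P}}$; it is a closed convex polyhedron, say $\bar P=\Brace{\Paren{t_1,t_2}\filter\alpha_k t_1+\beta_k t_2\leq\gamma_k,\ k=1,\ldots,m}$, and let $I_P$ be its projection onto the $t_1$-axis, which is a closed interval. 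First I would record the shape of the slices: for $c\oftype I_P$ one has $\bar P\cap\Sem{t_1=c}=\Brace{c}\times\Braket{a(c),b(c)}$, where $a$ is the pointwise maximum of the affine functions of $c$ obtained from the constraints with $\beta_k<0$ and $b$ the pointwise minimum of those with $\beta_k>0$ (with $a\equiv-\infty$, resp.\ $b\equiv+\infty$, when no such constraint exists, in which case $\Norm{\bar P\cap\Sem{t_1=c}}=\infty$ throughout $I_P$ and any $\delta_P$ works). Hence $a$ is a finite convex piecewise-linear function and $b$ a finite concave piecewise-linear function on $\Real$, so $g_P\defEQ b-a$ restricted to $I_P$ is a nonnegative, finite, concave piecewise-linear function, and $\Norm{\bar P\cap\Sem{t_1=c}}=g_P(c)$ there.

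Next I would argue by contradiction. If no $\delta_P$ works, there are $c_n$ with $\Paren{c_n,r}\oftype\bar P$ (hence $c_n\oftype I_P$) and $0<g_P(c_n)\to 0$. Passing to a subsequence, $c_n$ converges to some $c^*\oftype\Real$ or tends to $\pm\infty$. If $c_n\to\pm\infty$, then $g_P$ is a nonnegative concave function on a half-line contained in $I_P$; such a function is monotone on that half-line, so $g_P(c_n)\to 0$ forces it to vanish identically there, contradicting $g_P(c_n)>0$. Hence $c_n\to c^*\oftype I_P$, and continuity of $g_P$ on $I_P$ gives $g_P(c^*)=0$. If $c^*$ were interior to $I_P$ (this in particular covers $I_P=\Real$), the nonnegative concave $g_P$ vanishing at $c^*$ would vanish on all of $I_P$, again contradicting $g_P(c_n)>0$; so $c^*$ is an endpoint of $I_P$. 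Moreover $g_P(c^*)=0$ means $a(c^*)=b(c^*)=:d$, so $\bar P\cap\Sem{t_1=c^*}$ is the single point $\Paren{c^*,d}$.

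The last step is to recognise $\Paren{c^*,d}$ as a vertex and collide with the choice of $r$. Since $a$ and $b$ are continuous on $\Real$ and $a(c_n)\leq r\leq b(c_n)$ for every $n$ (because $r$ lies in the slice at $t_1=c_n$), letting $n\to\infty$ gives $d\leq r\leq d$, so $r=d$; thus $\Paren{c^*,r}$ is the unique point of $\bar P\cap\Sem{t_1=c^*}$. As $c^*$ is an endpoint of the $t_1$-projection of $\bar P$, the line $\Sem{t_1=c^*}$ supports $\bar P$, so $\bar P\cap\Sem{t_1=c^*}=\Brace{\Paren{c^*,r}}$ is a proper face of $\bar P$; being $0$-dimensional, it is a vertex, whence $\Paren{c^*,r}\oftype V_P\subseteq C_1$. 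Then $r$ belongs to $C_2$, the projection of $C_1$ onto $t_2$, and so $r\oftype C_3$ --- contradicting $r\oftype\II\setminus C_3$. This contradiction delivers $\delta_P$, and hence $\delta$.

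The part I expect to take the most care is the limiting argument: correctly dispatching the unbounded and degenerate shapes of $\bar P$ in the subsequence analysis, verifying that $a$ and $b$ are genuinely continuous at the limiting $c^*$ so that a slice which shrinks while always containing the fixed value $r$ must shrink \emph{to} $r$, and confirming that a zero-width slice at an extreme value of $t_1$ really is a vertex of $\bar P$ --- this is exactly the point where the care taken in choosing $r\notin C_3$ is used. The underlying convexity and piecewise-linearity facts are standard.
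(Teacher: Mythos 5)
Your proof is correct, and while it shares the paper's overall skeleton --- argue by contradiction, use finiteness of $\PP$ to reduce to a single offending polyhedron with a sequence $c_n$ of slices whose widths lie in $\Paren{0,1/n}$, and derive a clash with the choice of $r\notin C_3$ --- the way you get from the shrinking slices to that clash is genuinely different and, in fact, more complete. The paper disposes of degenerate polyhedra (points, lines, segments) by explicit case analysis and then, for $P$ with nonempty interior, simply asserts that an infinite subsequence of $\Paren{c_n,r}$ converges to a vertex of $P$, without explaining why positive widths tending to $0$ force this, and without addressing the possibility that the $c_n$ escape to infinity. You replace both steps with one structural fact: the slice width $c\mapsto\Norm{\Closure{\Sem{P}}\cap\Sem{t_1=c}}$ is a nonnegative, continuous, concave piecewise-linear function on the projection interval, so it cannot tend to $0$ without attaining $0$ at an endpoint of that interval (monotonicity of nonnegative concave functions kills the unbounded case, and vanishing at an interior point would force identical vanishing); the resulting zero-width slice is an exposed $0$-dimensional face, hence a vertex, and the sandwich $a(c_n)\leq r\leq b(c_n)$ forces its $t_2$-coordinate to equal $r$, putting $r\in C_2\subseteq C_3$. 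This buys a uniform treatment of all shapes of $P$ and a rigorous justification of exactly the step the paper leaves implicit; the only mild caveats are the ones you already flag (continuity of $a,b$ at endpoints, which holds because they are finite maxima/minima of affine functions, and the convention that the vertices in $V_P$ are those of $\Closure{\Sem{P}}$, which the paper's own argument also needs).
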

\begin{proof}
For the purpose of contradiction, let us assume that the lemma does
not hold. Since $\PP$ is a finite set, we therefore have, 
\[
\exists P\oftype\PP\SuchThat                                         
         \forall n\oftype\pNat\WeHave
         \exists c_n\oftype\Real\SuchThat
         \Paren{c_n,r}\in\Closure{\Sem{P}}\wedge\Norm{\Closure{\Sem{P}}\cap\Sem{t_1=c_n}}\in\Paren*{0,\frac{1}{n}}
\]
Let $P$ be the polyhedron witnessing the violation of the lemma as in
the above equation.  If $\Sem{P}$ is an empty set, point or a
line/line segment/half line that is not horzontal then its
intersection with $\Sem{t_1=c_n}$ is either empty or has width $0$,
which contradicts the fact that $P$ violates the lemma. Otherwise, if
$\Sem{P}$ is a horizontal line, a horizontal half line, or a
horizontal line segment, its intersection with $\Sem{t_1=c_n}$ is
either empty or has a fixed width, which again contradicts $P$
violating the lemma. Therefore, consider the case when $P$ has a
non-empty interior. Since $P$ has a finite number of vertices, an
infinite subsequence of $\Paren{c_n,r}$ converges to a vertex of $P$.
However, this is also a contradiction since our choice of $r$ ensures
that $\Paren{c_n,r}$ is always $\epsilon$ away from any point in
$C_1$.
\end{proof}
  
%\begin{floatingfigure}[r]{30mm}
%\centering
%\includegraphics[width=30mm]{./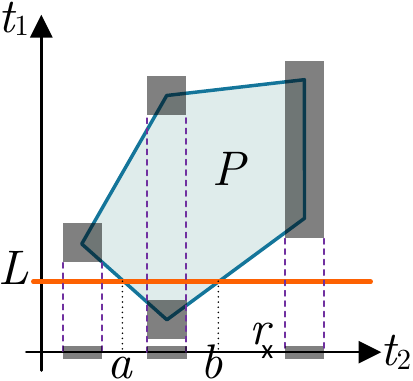}
%\vskip-\baselineskip
%\caption{}\label{fig:r}
%\end{floatingfigure}
%
\begin{wrapfigure}{r}{30mm}%
\raggedleft
\vskip-2\baselineskip
\includegraphics[width=30mm]{./fig-r.pdf}
\vskip-\baselineskip
\caption{}\label{fig:r}
\vskip-3\baselineskip
\end{wrapfigure}
For the rest of this section, let us fix $r$ as above, and take
$\delta$ to be such that in addition to \aref{lem:rndelta}, it satisfies $\iSup{\II} - r > \delta$. \Fig~\ref{fig:r}
shows a geometric interpretation for the parameters $r,\delta$ we have
identified. For any clause $P\Implies S$ in $\phi$ and for any
horizontal line $L$ defined by $t_1=c$ (for any $c\oftype\Real$), if
$\Paren{c,r}\in\Sem{P}$ then we have
\begin{inparaenum}
\item either $a = b = r$, or
\item if $S$ contains a constraint of the form $f^{t_2}\Sat\psi$ then
  then $S$ is checked for all values of $t_2$ in an inveral of size
  $>\delta$ around $r$.
\end{inparaenum}

\subsubsection{Constructing Signal $\Sgn_1$.}
\aref{fig:signals-1} shows the signal $\Sgn_1$.  $\Sgn_1$ is the
signal $\Paren{\Closed{0,r},\Brace{p,\allowbreak\neg q}},
\Paren{\LopRcl{r,r+\delta},\Brace{\neg p,q}},
\Paren{\Opened{r+\delta,\infty},\Brace{p,\neg q}}$.  It is easy to see
that $\Sgn_1\Sat\neg\Paren{\Until{p}{q}[\II]}$ (where $\Sat$ is either
$\OSat$ or $\NSat$); the reason is similar to
\aref{ex:2vars}. Therefore, if $\phi$ is equivalent to
$\neg\Paren*{\Until{p}{q}[\II]}$, then $\Sgn_1$ also satisfies $\phi$.

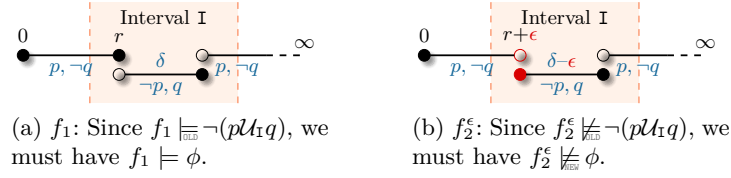
\begin{figure}[t]
\centering
  \subfloat[$\Sgn_1$:
            Since $\Sgn_1\OSat\neg\Paren{\Until{p}{q}[\II]}$, we must have $\Sgn_1\Sat\phi$.
            \label{fig:signals-1}]{\scalebox{0.85}{\begin{tikzpicture}
  \definecolor{orange1}{rgb}{1.000,0.95,0.914}
  \definecolor{orange2}{rgb}{0.988,0.56,0.388}
  \definecolor{blue}   {rgb}{0.075,0.40,0.600}

  \tikzset{Inclusive/.style= {circle,fill=black,draw=black,line width=0mm,inner sep=.2em,blur shadow={shadow blur steps=5}}}
  \tikzset{Exclusive/.style= {circle,           draw=black,line width=0mm,inner sep=.2em,blur shadow={shadow blur steps=5}}}
  \tikzset{Interval/.style={rectangle, fill=orange1}}
  \tikzset{IntervalE/.style={rectangle, draw=orange2,line width=0.2mm,dashed}}

  \path
  (0.0, 0.0) node[Inclusive,label=$0$]           (0i)    {}
  (1.5, 0.0) node[Inclusive,label=$r$]           (1i)    {}
  (1.5,-0.3) node[Exclusive          ]           (1e)    {}
  (2.8,-0.3) node[Inclusive          ]           (4i)    {}
  (2.8, 0.0) node[Exclusive          ]           (4e)    {}
  (3.8, 0.0) coordinate                          (inf1)  {}
	(4.4, 0.0) coordinate[label=$\infty$]          (inf2)  {}
	(2.2, 0.6) node	 (dummy) {Interval \II}
  ;

  \path[black,thick,text=blue]
  (0i)   edge         node [below] (l1) {$     p,\neg q$} (1i)
  (1e)   edge         node [below] (l2) {$\neg p,     q$} node [above] (l7) {$\delta$} (4i)
  (4e)   edge         node [below] (l5) {$     p,     \neg q$}	(inf1)	(inf1) edge[dashed] node [below] (l6) {}	(inf2)
  ;

  \begin{pgfonlayer}{bg} 
  \path
  node[Interval,anchor=north,inner xsep=10pt,inner ysep=0,fit=(1e)(l2)(dummy)] (interval) {}
	(interval.north west) edge[IntervalE] (interval.south west) {}
	(interval.north east) edge[IntervalE] (interval.south east) {}
  ;
  \end{pgfonlayer}

\end{tikzpicture}}}
  \hspace{10mm}
  \subfloat[$\Sgn^\epsilon_2$:
            Since $\Sgn^\epsilon_2\ONSat\neg\Paren{\Until{p}{q}[\II]}$, we must have $\Sgn^\epsilon_2\NNSat\phi$.
            \label{fig:signals-2}]{\scalebox{0.85}{\begin{tikzpicture}
  \definecolor{orange1}{rgb}{1.000,0.95,0.914}
  \definecolor{orange2}{rgb}{0.988,0.56,0.388}
  \definecolor{blue}   {rgb}{0.075,0.40,0.600}
  \definecolor{red}    {rgb}{0.850,0.00,0.000}

  \tikzset{Inclusive/.style= {circle,fill=black,draw=black,line width=0mm,inner sep=.2em,blur shadow={shadow blur steps=5}}}
  \tikzset{Exclusive/.style= {circle,           draw=black,line width=0mm,inner sep=.2em,blur shadow={shadow blur steps=5}}}
  \tikzset{Interval/.style={rectangle, fill=orange1}}
  \tikzset{IntervalE/.style={rectangle, draw=orange2,line width=0.2mm,dashed}}

	\newcommand{\EPS}{\textcolor{red}{\ensuremath{\epsilon}}}

  \path
  (0.0, 0.0) node[Inclusive,label=$0$]           (0i)    {}
  (1.5, 0.0) node[Exclusive,label=$r$+\EPS,color=red]      (1i)    {}
  (1.5,-0.3) node[Inclusive,               color=red]           (1e)    {}
  (2.8,-0.3) node[Inclusive          ]           (4i)    {}
  (2.8, 0.0) node[Exclusive          ]           (4e)    {}
  (3.8, 0.0) coordinate                          (inf1)  {}
	(4.4, 0.0) coordinate[label=$\infty$]          (inf2)  {}
	(2.2, 0.6) node	 (dummy) {Interval \II}
  ;

  \path[black,thick,text=blue]
  (0i)   edge         node [below] (l1) {$     p,\neg q$} (1i)
  (1e)   edge         node [below] (l2) {$\neg p,     q$} node [above] (l7) {$\delta$--\EPS} (4i)
  (4e)   edge         node [below] (l5) {$     p,     \neg q$}	 (inf1)
	(inf1) edge[dashed] node [below] (l6) {} (inf2)
  ;

  \begin{pgfonlayer}{bg} 
  \path
  node[Interval,anchor=north,inner xsep=10pt,inner ysep=0,fit=(1e)(l2)(dummy)] (interval) {}
	(interval.north west) edge[IntervalE] (interval.south west) {}
	(interval.north east) edge[IntervalE] (interval.south east) {}
  ;
  \end{pgfonlayer}

\end{tikzpicture}}}
\caption{Signals $\Sgn_1$ and $\Sgn^\epsilon_2$ (interval widths are not proportional).}
\label{fig:signals}
\end{figure}

\subsubsection{Constructing signal $\Sgn^\epsilon_2$.}
Since $\Sgn_1$ satisfies $\phi$, there are $c_1,c_2,\ldots
c_n\oftype\Real$ and $j_1,\ldots,j_n\oftype\Nat$ such that for any
$i\oftype\Brace{1,\ldots,n}$, line $L_i\defEQ\Paren{t_1=c_i}$
witnesses the satisfaction of $\exists t_1\SuchThat\forall
t_2\WeHave\ \phi_{i,j_i}$.
Consider a clause $\phi_{i,j_i,k}$ of $\phi_{i,j_i}$ of the form
$P_{i,j_i,k} \Implies S_{i,j_i,k}$. We know that line $L_i$ witnesses
the satisfaction of this clause. Define interval $\JJ_{i,j_i,k}$ to be
the interval given by $\Sem{P_{i,j_i,k}}\cap\Sem{L_i} = \Brace{c_i}
\times \JJ_{i,j_i,k}$. Choose $\epsilon_{i,j_i,k}\oftype\pReal$ to be
such that either $\Paren{r,r+\epsilon_{i,j_i,k}} \subseteq
\JJ_{i,j_i,k}$ or $\Paren{r,r+\epsilon_{i,j_i,k}} \cap \JJ_{i,j_i,k} =
\emptyset$. Such a choice of $\epsilon_{i,j_i,k}$ always exists no
matter what $\JJ_{i,j_i,k}$ is. Fix the parameter $\epsilon$ to be
\[
\epsilon \defEQ \frac{1}{2} \min 
     \Paren{\Brace{\epsilon_{i,j_i,k}\filter\mbox{any }i,j_i,k}\cup
            \Brace{c_i-r\filter c_i>r}}.
\]
Notice that our choice of $\epsilon$ ensures that for $i,j_i,k$,
$\Paren{r,r+\epsilon}$ is either contained in $\JJ_{i,j_i,k}$ or is
disjoint from it.

Having defined $\epsilon$, we are ready to describe the signal
$\Sgn^\epsilon_2$ which is shown in
\aref{fig:signals-2}. $\Sgn^\epsilon_2$ is given as
$\Paren{\LclRop{0,r+\epsilon},\Brace{p,\neg q}},
\Paren{\Closed{r+\epsilon,r+\delta},\Brace{\neg p,q}},
\Paren{\Opened{r+\delta,\infty},\Brace{p,\neg q}}$. Notice $\Sgn_1$
and $\Sgn^\epsilon_2$ only differ in the interval
$\Paren{r,r+\epsilon}$. Further, $f^\epsilon_2$ satisfies
$\Until{p}{q}[\II]$. 

\subsubsection{Deriving a Contradiction.}
Let $c_1,c_2,\ldots c_n$, $L_1,L_2,\ldots L_n$, and $j_1,\ldots j_n$,
as defined above, be the witness that demonstrates that $\Sgn_1$
satisfies $\phi$. We will show that these also witness the fact that
$\Sgn^\epsilon_2$ satisfies $\phi$, giving us the desired
contradiction. That is, we will show that the lines $L_i \defEQ
\Paren{t_1=c_i}$ witness the fact that $f^\epsilon_2$ satisfies
$\exists t_1\SuchThat\forall t_2\WeHave\ \phi_{i,j_i}$. Consider any
clause $P_{i,j_i,k} \Implies S_{i,j_i,k}$ of $\phi_{i,j_i}$.
\begin{myitems}
\item Suppose $S_{i,j_i,k}$ is of the form $\Sgn^{t_1}\Sat\psi$, where
  $\psi$ is a boolean combination of propositions $p,q$. Observe that
  by construction $t_1 \not\in \Paren{r,r+\epsilon}$, and so
  $\Sgn_1(t_1) = \Sgn^\epsilon_2(t_1)$. Therefore, since $\Sgn_1$
  satisfies $S_{i,j_i,k}$, so does $\Sgn^\epsilon_2$.
\item Suppose $S_{i,j_i,k}$ is of the form $\Sgn^{t_2}\Sat\psi$. Let
  $\JJ_{i,j_i,k}$ be as defined above. By our choice of $\epsilon$, we
  know that either $\Paren{r,r+\epsilon} \cap \JJ_{i,j_i,k} =
  \emptyset$ or $\Paren{r,r+\epsilon} \subseteq \JJ_{i,j_i,k}$. In the
  first case, we have $\Sgn_1(t) = \Sgn^\epsilon_2(t)$ for all $t \in
  \JJ_{i,j_i,k}$. Therefore, $\Sgn_1$ satisfies $\forall t_2 \in
  \JJ_{i,j_i,k}\SuchThat\ S_{i,j_i,k}$ iff $\Sgn^\epsilon_2$ satisfies
  the same condition. Now, let us consider the more interesting case
  when $\Paren{r,r+\epsilon} \subseteq \JJ_{i,j_i,k}$. Observe that in
  this case $r \in \Closure{\Sem{\JJ_{i,j_i,k}}}$, and so
  \aref{lem:rndelta} applies, and we can conclude that
  $\Norm{\Closure{\Sem{\JJ_{i,j_i,k}}}} > \delta$. This means that
  either there is a $t < r$ such that $t \in \JJ_{i,j_i,k}$ or there
  is a $t > r+\delta$ such that $t \in \JJ_{i,j_i,k}$. Thus, for any
  $t_2 \in \Paren{r,r+\epsilon}$, there is a $t \in \JJ_{i,j_i,k}$
  such that $\Sgn^\epsilon_2(t_2) = \Sgn_1(t)$. Hence, once again we
  can conclude that $L_i$ witnesses the satisfaction of $\forall
  t_2\SuchThat S_{i,j_i,k}$ by $\Sgn^\epsilon_1$ since $\Sgn_1$ does.
\item The last case to consider is when $S_{i,j_i,k}$ is of the form
  $\Sgn^{t_1}\Sat\psi_1\vee\Sgn^{t_2}\Sat\psi_2$. In this case also we
  can conclude that $f^\epsilon_2$ satisfies this clause using the
  reasoning in the previous two cases.
\end{myitems}

\section{Satisfiability and Model Checking \MITL\ Formulas}\label{sec:alg}

The satisfiability and model checking problems for \MITL\ are as
follows. In satisfiability, given an \MITL\ formula $\varphi$, one
needs to determine if there is a finite variable signal $\Sgn$ that
satisfies $\varphi$. The model checking problem is, given a
\TAls\ ${\cal T}$ and a \MITL\ formula $\varphi$, determine if every
finite variable signal produced by ${\cal T}$ satisfies
$\varphi$. Algorithms for both these problems rely on translating the
\MITL\ formula $\varphi$ (or its negation, in the case of model
checking) to a \TAls\ $\Sem{\varphi}$ and then either checking
emptiness of $\Sem{\varphi}$ (for satisfiability) or checking the
emptiness of the intersection of two \TAlp\ (for model
checking). Since \TAlp\ are not closed under complementation, decision
procedures rely on translating a formula in {\NNF}. As observed in
\Ex~\ref{ex:2vars}, since the semantics of $\ReleaseOp$ is incorrect,
the decision procedures for satisfiability and model checking given
in~\cite{96-MITL,04-Logics4RT,08-RecentMTL} are incorrect. In this
section, we describe a translation of \MITL\ to \TAlp\ with respect to
the correct semantics given in \aref{def:mtl-newsem}.

The translation given in~\cite{96-MITL} from \MITL\ in \NNF\ to
{\TAlp}, is correct when the semantics of $\ReleaseOp$ is taken to be
as given in \aref{def:mtl-sem}. We will exploit this construction to
give a translation with respect to the semantics in
\aref{def:mtl-newsem}. More precisely, in \aref{def:old}, we transform
an \MITL\ formula $\varphi$ into $\toOld[\varphi]$ such that for any
signal $\Sgn$, we have
$\Paren{\Sgn\NSat\varphi}\Leftrightarrow\Paren{\Sgn\OSat\toOld[\varphi]}$.

\begin{definition}\label{def:old}
The transformation $\toOld$ is inductively defined as follows. In this
definition, $\varphi'_1$ and $\varphi'_2$ are $\toOld[\varphi_1]$ and
$\toOld[\varphi_2]$, respectively.
\[
\begin{array}{@{}l@{}}
\begin{array}{@{}lll@{\ \ \ \ \ \ \ }lll@{\ \ \ \ \ \ \ }lll@{\ \ \ \ \ \ \ }lll@{}}
\toOld[\top]																&\defEQ& \top																	&	
\toOld[p]																		&\defEQ& p																		&	
\toOld[\varphi_1\vee  \varphi_2]						&\defEQ& \varphi'_1\vee  \varphi'_2						
\\
\toOld[\bot]																&\defEQ& \bot																	&	
\toOld[\neg\varphi]													&\defEQ& \neg\toOld[\varphi]									&
\toOld[\varphi_2\wedge\varphi_2]						&\defEQ& \varphi'_1\wedge\varphi'_2						
\end{array}
\\
\begin{array}{@{}l@{\ }l@{\ }l@{}}
\toOld[\Until  {\varphi_1}{\varphi_2}[\II]]	&\defEQ& \Until{\varphi'_1}{\varphi'_2}[\II]	\\
\toOld[\Release{\varphi_1}{\varphi_2}[\II]]	&\defEQ& 																			
	\begin{array}[t]{@{}l@{\ }l@{}}
		\Paren{\Release{\varphi'_1}{\varphi'_2}[\II]}\vee\Paren{\Release{\ltlN\varphi'_1}{\varphi'_2}[\II]}		&\mbox{if }\iInf{\II}>0	\\
		\Paren{\Release{\varphi'_1}{\varphi'_2}[\II]}\vee\Paren{\Release{\ltlN\varphi'_1}{\varphi'_2}[\II]}\vee\ltlN\varphi'_1 
		&\mbox{if }\iInf{\II}=0\wedge\IsLOpen{\II}	\\
		\Paren{\Release{\varphi'_1}{\varphi'_2}[\II]}\vee\Paren{\Release{\ltlN\varphi'_1}{\varphi'_2}[\II]}\vee\Paren{\varphi'_2\wedge\ltlN\varphi'_1}
		&\mbox{otherwise}\\
	\end{array}
\end{array}
\end{array}
\]
\end{definition}

The transformation $\toOld$ ensures that the semantics of the
transformed formula $\toOld[\varphi]$ with respect to $\OSat$, is the
same as the semantics of $\varphi$ with respect to $\NSat$.

\PVSaddress{\PVSLabel{mtl}{sat\_and\_isat}}
\begin{pvslemma}\label{lem:old2new}
For any signal $\Sgn$ and \MTL\ formula $\varphi$, we have $\Paren{\Sgn\NSat\varphi}\Leftrightarrow\Paren{\Sgn\OSat\toOld[\varphi]}$.
\begin{proof}[\ProofIdea]
Use induction on the structure of $\varphi$.
Special treatment for $\iInf{\II}$ is needed because both \aref{def:mtl-sem} and \aref{def:mtl-newsem} define what is called \emph{strict} 
semantics, in which value of signal at time $0$ is {\em not} important when $\iInf{\II}>0$.
\end{proof}
\end{pvslemma}

It is worth emphasizing that \aref{def:old} and \aref{lem:old2new}
apply to any \MTL\ formula (not just {\MITL}), and the soundness of
the transformation holds for \emph{any} signal (and not just finite
variable signals).

\aref{lem:old2new} immediately gives us a procedure for transforming a
negated normal formula into a \TAls\ according to
\aref{def:mtl-newsem}.  For any \MITL\ formula $\varphi$, we transform
$\toOld[\nnf[\varphi]]$ into a \TAls\ according to~\cite{96-MITL}.
Note that output of $\toOld$ is in \NNF\ \iFF\ its input
is~\footnote{\PVSLabel{mtl}{toISatNNF}}.  Using \aref{thm:duality} and
\aref{lem:old2new} we know that the transformation is correct.

The main problem with this approach and \aref{def:old}, is that
$\toOld[\varphi]$ could be exponentially larger than $\varphi$. So we
need to address the concern that this might change the complexity of
satisfiability and model checking.
The complexity of the transformation in~\cite{96-MITL} for {\MITL} and
{\MITLzi} depends only on the number of \emph{distinct} subformulas in
$\varphi$, and \emph{not} on the formula size of $\varphi$
itself~\footnote{The complexity depends on the size of the DAG
  representation of the formula, and not its syntactic
  representation.}. In \aref{prop:subsize}, we show that the number of
subformulas of $\toOld[\varphi]$ is \emph{linearly} related to the
number of subformulas of $\varphi$.
Thus using the construction in~\cite{96-MITL} for
$\toOld[\nnf[\varphi]]$ does not change the complexity results for
satisfiability and model checking~\footnote{There are multiple initial
  transformations in~\cite{96-MITL}, and each one of them can make the
  size of formula exponentially bigger. However, the number of
  distinct subformulas remains linear to the size of original
  formula.}.

\newcommand{\PROPsubsize}{%
For any \MTL\ formula $\varphi$, we have $\Size{\Sub_{\toOld[\varphi]}} \leq 6\Size{\Sub_\varphi}$, where for any \MTL\ formula $\psi$, $\Sub_\psi$ is the set 
of subformulas of $\psi$ (including $\psi$, itself).}
\begin{proposition}\label{prop:subsize}
\PROPsubsize
\end{proposition}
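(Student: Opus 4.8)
The plan is to prove the bound by structural induction on $\varphi$, but organised around a bookkeeping device rather than a naive size inequality: a direct induction on $\Size{\Sub_{\toOld[\varphi]}} \le 6\Size{\Sub_\varphi}$ breaks in the binary cases when the two immediate subformulas share subformulas (e.g.\ $\varphi_1 = \varphi_2$), because $\Size{\Sub_{\toOld[\varphi_1]}} + \Size{\Sub_{\toOld[\varphi_2]}}$ then overcounts. Instead, for every \MTL\ formula $\chi$ define $N(\chi) \subseteq \Sub_{\toOld[\chi]}$ to be the set of ``fresh'' subformulas that \aref{def:old} introduces at the root of $\toOld[\chi]$, depending only on the outermost connective of $\chi$: if $\chi$ is $\top$, $\bot$, or an atom then $N(\chi) = \Brace{\chi}$; if $\chi$ is $\neg\chi_1$, $\chi_1\vee\chi_2$, $\chi_1\wedge\chi_2$, or $\Until{\chi_1}{\chi_2}[\II]$ then $N(\chi) = \Brace{\toOld[\chi]}$; and if $\chi = \Release{\chi_1}{\chi_2}[\II]$ then, in the richest (``otherwise'') branch of \aref{def:old}, $N(\chi)$ consists of the six formulas $\ltlN\toOld[\chi_1]$, $\Release{\toOld[\chi_1]}{\toOld[\chi_2]}[\II]$, $\Release{\ltlN\toOld[\chi_1]}{\toOld[\chi_2]}[\II]$, their disjunction, $\toOld[\chi_2]\wedge\ltlN\toOld[\chi_1]$, and the full disjunction $\toOld[\chi]$, while in the two $\iInf{\II} = 0$ branches it has only five or four elements. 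In every case $\Size{N(\chi)} \le 6$.

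The first key step is the set identity
\[
\Sub_{\toOld[\varphi]} = \bigcup_{\chi\oftype\Sub_\varphi} N(\chi),
\]
proved by structural induction on $\varphi$. The base cases are immediate since then $\toOld[\varphi] = \varphi$ and $\Sub_\varphi = N(\varphi) = \Brace{\varphi}$. If $\varphi = \varphi_1\vee\varphi_2$, $\varphi_1\wedge\varphi_2$, or $\Until{\varphi_1}{\varphi_2}[\II]$, then \aref{def:old} gives $\Sub_{\toOld[\varphi]} = N(\varphi)\cup\Sub_{\toOld[\varphi_1]}\cup\Sub_{\toOld[\varphi_2]}$; applying the induction hypothesis to $\varphi_1$ and $\varphi_2$ and using $\Sub_\varphi = \Brace{\varphi}\cup\Sub_{\varphi_1}\cup\Sub_{\varphi_2}$ rewrites the right-hand side as $\bigcup_{\chi\oftype\Sub_\varphi}N(\chi)$; the unary case $\varphi = \neg\varphi_1$ is identical. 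The release case $\varphi = \Release{\varphi_1}{\varphi_2}[\II]$ is the only one requiring care: one checks by hand that the subformulas of $\toOld[\varphi]$ are exactly the (at most six) members of $N(\varphi)$ together with $\Sub_{\toOld[\varphi_1]}$ and $\Sub_{\toOld[\varphi_2]}$ --- the crucial observations being that the only subformula of $\ltlN\toOld[\varphi_1] = \Release{\toOld[\varphi_1]}{\toOld[\varphi_1]}[\Paren{0,\infty}]$ lying outside $\Sub_{\toOld[\varphi_1]}$ is $\ltlN\toOld[\varphi_1]$ itself, and that the bare occurrences of $\toOld[\varphi_1]$ and $\toOld[\varphi_2]$ inside the disjuncts add nothing beyond $\Sub_{\toOld[\varphi_1]}$ and $\Sub_{\toOld[\varphi_2]}$ --- after which the induction hypothesis closes this case exactly as above.

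Given the identity, the proposition follows by a union bound: $\Size{\Sub_{\toOld[\varphi]}} = \Size*{\bigcup_{\chi\oftype\Sub_\varphi}N(\chi)} \le \sum_{\chi\oftype\Sub_\varphi}\Size{N(\chi)} \le 6\Size{\Sub_\varphi}$, using $\Size{N(\chi)} \le 6$. Crucially we never need the sets $N(\chi)$ to be pairwise disjoint --- the union bound suffices --- which is precisely what makes the argument immune to the subformula-sharing phenomenon that would break a direct size induction. The main obstacle is the release case of the set identity: one must be scrupulous in enumerating the subformulas of the (up to) three new $\ReleaseOp$-formulas, of their $\vee$/$\wedge$ combinations, and of the nested $\ltlN$, verifying that nothing is silently omitted or reused and that the count genuinely stays at $6$ and does not creep higher in the $\iInf{\II} = 0$ branches. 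In fact the bound is slightly loose, since $\Size{N(\chi)} = 1$ except at $\ReleaseOp$-nodes, giving the sharper $\Size{\Sub_{\toOld[\varphi]}} \le \Size{\Sub_\varphi} + 5\cdot\Size{\Brace{\psi\oftype\Sub_\varphi\filter\psi\text{ is a }\ReleaseOp\text{-formula}}}$.
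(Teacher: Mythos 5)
Your proof is correct, and the underlying count is exactly the one the paper relies on: each subformula of $\varphi$ spawns at most $6$ formulas of $\toOld[\varphi]$, with the bound $6$ realized only at $\ReleaseOp$-nodes (the three disjuncts, the intermediate disjunction, the full disjunction, and $\ltlN\toOld[\varphi_1]$), and subformula-sharing is rendered harmless by working with sets rather than with syntactic occurrences. Where you differ is in the bookkeeping that makes the induction go through. The paper strengthens the statement to finite \emph{sets} $\AAA$ of formulas and inducts on $\Height[\AAA]$, peeling off the top layer $\BBB$, splitting it into non-release formulas $\BBB_1$ and release formulas $\BBB_2$, and showing $\bigcup_{\varphi\oftype\AAA}\Sub_{\toOld[\varphi]}$ is covered by a set $\CCC'$ (handled by the induction hypothesis on the strictly lower set $\CCC$ of operands and remaining formulas) together with $\BBB'_1$ and $\BBB'_2$ of sizes at most $\Size{\BBB_1}$ and $6\Size{\BBB_2}$. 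You instead prove the exact identity $\Sub_{\toOld[\varphi]} = \bigcup_{\chi\oftype\Sub_\varphi} N(\chi)$ by ordinary structural induction on a single formula and finish with a union bound $\Size{\bigcup_\chi N(\chi)}\leq\sum_\chi\Size{N(\chi)}\leq 6\Size{\Sub_\varphi}$. Your route buys a simpler induction (no generalization to sets, no height measure) and a reusable attribution map $N$ that immediately yields your sharper bound $\Size{\Sub_{\toOld[\varphi]}}\leq\Size{\Sub_\varphi}+5\cdot\Size{\Brace{\psi\oftype\Sub_\varphi\filter\psi\mbox{ is a }\ReleaseOp\mbox{-formula}}}$; the paper's route avoids having to state and verify an exact set identity (it only needs an inclusion) at the cost of the heavier set-of-formulas induction. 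Both correctly identify and enumerate the six fresh nodes per $\ReleaseOp$ (cf.\ the paper's \aref{fig:fml-dag}), including the subtlety that $\ltlN\toOld[\varphi_1]$, being $\Release{\toOld[\varphi_1]}{\toOld[\varphi_1]}[\Opened{0,\infty}]$ with both operands equal, contributes only itself beyond $\Sub_{\toOld[\varphi_1]}$.
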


The proof of \aref{prop:subsize} is defered to the appendix in the
interests of space. It is worth noting that this proposition also
applies to any \MTL\ formula and not just \MITL. Using
\aref{prop:subsize}, we can conclude that the complexity of
satisfiability and model checking remain unchanged in the new
semantics.

\begin{corollary}\label{cor:complexity}
With respect to the semantics in \aref{def:mtl-newsem}, the
satisfiability and model checking problems for \MITLzi\ and
\MITL\ are \PSPACEcomp\ and \EXPSPACEcomp, respectively.
\end{corollary}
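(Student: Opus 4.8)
The plan is to reduce satisfiability and model checking under the new semantics to the corresponding problems under the old semantics on formulas already in negated normal form, where the automata-theoretic procedure of~\cite{96-MITL} applies, and then to argue that none of the intervening transformations costs more than a polynomial.

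\textbf{Upper bounds.} Given an \MITL\ (resp.\ \MITLzi) formula $\varphi$, I would set $\psi \defEQ \toOld[\nnf[\varphi]]$ and establish three facts. First, $\psi$ is in \NNF\ (noted after \aref{lem:old2new}) and is still an \MITL\ formula, and still an \MITLzi\ one when $\varphi$ is: neither \aref{def:nnf} nor \aref{def:old} alters the intervals decorating $\UntilOp$ and $\ReleaseOp$, and the only new interval that $\toOld$ introduces, through $\ltlN\chi = \Release{\chi}{\chi}[\Paren{0,\infty}]$, is $\Paren{0,\infty}$, which is not a singleton and has infimum $0$ and supremum $\infty$. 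Second, for every finitely variable signal $\Sgn$, $\Sgn\NSat\varphi$ \iFF\ $\Sgn\OSat\psi$: by \aref{thm:duality}, $\Sgn\NSat\varphi$ \iFF\ $\Sgn\NSat\nnf[\varphi]$ (finite variability from the right suffices, and finite variability implies it), and by \aref{lem:old2new}, $\Sgn\NSat\nnf[\varphi]$ \iFF\ $\Sgn\OSat\toOld[\nnf[\varphi]]$. Third, $\Size{\Sub_\psi}$ is linear in $\Size{\Sub_\varphi}$: a routine induction gives $\Size{\Sub_{\nnf[\varphi]}} \leq 2\Size{\Sub_\varphi}$ (every subformula of $\nnf[\varphi]$ has the shape $\nnf[\chi]$ or $\nnf[\neg\chi]$ for some $\chi\oftype\Sub_\varphi$), and \aref{prop:subsize} gives $\Size{\Sub_\psi} \leq 6\Size{\Sub_{\nnf[\varphi]}}$, whence $\Size{\Sub_\psi} \leq 12\Size{\Sub_\varphi}$.

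\textbf{From the automaton to complexity.} Applying the translation of~\cite{96-MITL} --- which is correct for $\OSat$ on \NNF\ formulas --- to $\psi$ yields a \TAls\ $\Sem{\psi}$ whose language is exactly the set of finitely variable signals $\Sgn$ with $\Sgn\OSat\psi$, hence, by the second fact, exactly the finitely variable models of $\varphi$ under $\NSat$. Therefore $\varphi$ is satisfiable iff $\Sem{\psi}$ is non-empty; and model checking a \TAls\ $\TTT$ against $\varphi$ reduces to deciding whether $\TTT\cap\Sem{\toOld[\nnf[\neg\varphi]]} = \emptyset$ (note $\neg\varphi$ is again in \MITL, resp.\ \MITLzi, since it has the same intervals). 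Both questions are decidable by the standard emptiness and intersection-emptiness algorithms for timed automata. For the complexity, the point is that the size of the automaton built in~\cite{96-MITL} (its number of components, clocks, and locations) is controlled by the number of \emph{distinct} subformulas of the input --- the size of its DAG representation --- rather than by its syntactic length; by the third fact, this quantity for $\psi$ is at most a constant factor larger than for $\varphi$, so the space used is the same up to a polynomial. This gives membership in \PSPACE\ for \MITLzi\ and in \EXPSPACE\ for \MITL, exactly as claimed in~\cite{96-MITL}.

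\textbf{Lower bounds and conclusion.} \PSPACE-hardness of satisfiability and model checking for \MITLzi, and \EXPSPACE-hardness for \MITL, were established in~\cite{96-MITL}. Their reductions produce formulas built only from $\top$, $\bot$, atomic propositions, $\neg$, $\vee$, $\wedge$, and $\UntilOp$ --- in particular $\Always_\II\chi$ is expressible as $\neg\Paren{\Until{\top}{\neg\chi}[\II]}$ --- and hence do not use $\ReleaseOp$. On formulas without $\ReleaseOp$, \aref{def:mtl-sem} and \aref{def:mtl-newsem} coincide verbatim, as the two definitions differ only in the $\ReleaseOp$ clause; so the same reductions witness the lower bounds for $\NSat$. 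Combined with the upper bounds above, this yields \PSPACEcomp\ for \MITLzi\ and \EXPSPACEcomp\ for \MITL. The one genuinely delicate step is the complexity bookkeeping: one must be explicit that the construction of~\cite{96-MITL} is polynomial in the DAG size of its input and not in its syntactic length --- otherwise the exponential blow-up of $\toOld$ would naively cost an extra exponential, placing \MITL\ in doubly-exponential space and \MITLzi\ in \EXPSPACE --- and one must prove the linear bound $\Size{\Sub_{\toOld[\nnf[\varphi]]}} \leq 12\Size{\Sub_\varphi}$, where \aref{prop:subsize} (together with the easy bound for $\nnf$) does the real work. A secondary point to check carefully is that $\nnf$ and $\toOld$ keep us within \MITL\ and \MITLzi, and that the automaton of~\cite{96-MITL} is invoked precisely in the regime ($\OSat$, \NNF) for which its correctness was proved and which \aref{lem:old2new} supplies.
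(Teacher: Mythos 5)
Your proposal is correct and follows essentially the same route as the paper: translate $\varphi$ to $\toOld[\nnf[\varphi]]$, invoke \aref{thm:duality} and \aref{lem:old2new} for semantic equivalence, use \aref{prop:subsize} to bound the number of distinct subformulas linearly so that the construction of~\cite{96-MITL} (whose cost depends on the DAG size, not the syntactic length) yields the same \PSPACE/\EXPSPACE\ upper bounds, with hardness inherited from~\cite{96-MITL}. Your explicit treatment of the lower bounds (the hardness reductions avoid $\ReleaseOp$, on which the two semantics coincide) and of the $\nnf$ subformula bound merely fills in details the paper leaves implicit.
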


%\MakeGrayColorDarker%must be at the end of Sec~{sec:reduction}
\section{\MITL\ with Wide Intervals (\MITLwi)}\label{sec:MITLwi}

One important result in~\cite{96-MITL} is the identification of
sublogic \MITLzi\ of \MITL, for which the satisfiability and model
checking problems are in \PSPACE, as opposed to \EXPSPACE\ for
\MITL. In this section we prove that this result can be generalized.
We identify a more expressive sublogic of \MITL\ for which
satisfiability and model checking are in \PSPACE.

For a formula $\varphi$ of \MITL, the size of $\varphi$ is the size of
the formula, where the constants appearing in the intervals are
represented in binary. Here we do not restrict constants in $\varphi$
to be natural numbers (as in~\cite{96-MITL}), but instead allow them
to be rational numbers; as is standard, we represent a rational number
as a pair of binary strings encoding the numerator and denominator of
the fractional representation. Define \MITLwi\ to be the collection of
\MITL\ formulas $\varphi$ such that every interval $\II$ appearing in
$\varphi$, either has 
\begin{inparaenum}
\item $\iInf{\II} = 0$ or 
\item $\iSup{\II} =\infty$ or
\item $\frac{\iSup{\II}}{\iSup{\II} - \iInf{\II}} \leq n$
when $0 < \iInf{\II} < \iSup{\II} < \infty$, where $n$ is the size of
$\varphi$. \label{cnd:WI}
\end{inparaenum}

Notice that every \MITLzi\ formula is also a \MITLwi\ formula, and
there are many \MITLwi\ formulas that are not \MITLzi\ formulas. Thus,
\MITLwi\ is a richer fragment of \MITL. Condition~\ref{cnd:WI} above in the
definition of \MITLwi\ says that when there is an interval not
conforming to the restrictions of \MITLzi, and it has a large
supremum, then the size of the interval must also be large. Thus,
intervals in \MITLwi\ can be thought of as being ``wide'' (and hence
the name). The main result of this section is the following.
\begin{theorem}\label{thm:MITLwi}
For any \MITLwi\ formula $\varphi$ of size $n$, there is a
\TAls\ $\Sem{\varphi}$ satisfying the following properties.
\begin{myenums}
\item For any finite variable signal $\Sgn$, $\Sgn$ is in the language
  of $\Sem{\varphi}$ \IFF\ $\Sgn\NSat\varphi$.
\item $\Sem{\varphi}$ has at most $2^{\BigO[n^2]}$ many locations and edges.
\item $\Sem{\varphi}$ has at most $\BigO[n^2]$ clocks.
\item $\Sem{\varphi}$ has at most $\BigO[n]$ distinct integer
  constants, each bounded by $2^{\BigO[n]}$.
\end{myenums}
Furthermore, $\Sem{\varphi}$ can be constructed in polynomial space
from $\varphi$.
\end{theorem}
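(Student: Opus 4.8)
The plan is to reduce the problem to the known construction of~\cite{96-MITL}, but applied to a carefully rescaled and rewritten formula so that the complexity accounting goes through. First I would recall that~\cite{96-MITL} gives a construction for an arbitrary {\MITL} formula $\psi$ whose resulting {\TAls} has $2^{\BigO[m \cdot K]}$ locations and edges, $\BigO[m \cdot K]$ clocks, and integer constants bounded by $2^{\BigO[\log K + \log c_{\max}]}$, where $m$ is the number of distinct subformulas of $\psi$, $c_{\max}$ is the largest constant appearing in $\psi$, and $K$ is the largest \emph{granularity} of an interval in $\psi$ --- that is, for an interval $\II$ with $0 < \iInf{\II} < \iSup{\II} < \infty$, the quantity that governs the blowup is essentially $\iSup{\II}/\Norm{\II}$ (the number of ``copies'' of a clock needed to track overlapping obligations). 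For {\MITLzi} every interval is either left-anchored at $0$ or unbounded, so $K = \BigO[1]$ and one recovers the {\PSPACE} bound; for general {\MITL} this granularity can be as large as $2^{\BigO[n]}$, producing the {\EXPSPACE} bound. The {\MITLwi} condition~\ref{cnd:WI} is exactly the statement that every ``bad'' interval has granularity $\leq n$, hence $K \leq n$ throughout.

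Next I would handle the rational constants and the new semantics. Since constants are rationals given in binary, I first clear denominators: let $D$ be the least common multiple of all denominators appearing in $\varphi$ (so $\log D = \BigO[n]$), and scale all interval endpoints by $D$ to obtain an equivalent {\MITL} formula $\varphi_{\mathrm{int}}$ with integer constants, each bounded by $2^{\BigO[n]}$, and with the same number of distinct subformulas and the same granularities (granularity is scale-invariant). Then I apply $\nnf$ and $\toOld$: by \aref{thm:duality} and \aref{lem:old2new}, $\toOld[\nnf[\varphi_{\mathrm{int}}]]$ is an {\MITL} formula in {\NNF} that is equivalent to $\varphi_{\mathrm{int}}$ over finitely-variable signals under $\OSat$, hence we may run the~\cite{96-MITL} construction on it. Crucially, by \aref{prop:subsize} the number of distinct subformulas of $\toOld[\nnf[\varphi_{\mathrm{int}}]]$ is at most $6\Size{\Sub_{\varphi_{\mathrm{int}}}} = \BigO[n]$, and inspecting \aref{def:old} one sees $\toOld$ introduces no new constants and no new intervals (it only reuses $\II$ and the fixed interval $(0,\infty)$ inside $\ltlN$), so the granularity bound $K \leq n$ is preserved, as is the constant bound $2^{\BigO[n]}$ and the {\NNF} property.

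Plugging $m = \BigO[n]$, $K \leq n$, $c_{\max} \leq 2^{\BigO[n]}$ into the bounds of~\cite{96-MITL} then yields: at most $2^{\BigO[n \cdot n]} = 2^{\BigO[n^2]}$ locations and edges (item~2); at most $\BigO[n \cdot n] = \BigO[n^2]$ clocks (item~3); and at most $\BigO[n]$ distinct integer constants each bounded by $2^{\BigO[n]}$ (item~4). Item~1 --- that the {\TAls} accepts exactly the finitely-variable models of $\varphi$ under $\NSat$ --- follows by chaining the correctness of the scaling, \aref{thm:duality}, \aref{lem:old2new}, and the correctness of the~\cite{96-MITL} translation for $\OSat$ on {\NNF} formulas. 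Finally, for the ``polynomial space'' claim: the scaling, $\nnf$, and $\toOld$ are all computable in polynomial space (indeed polynomial time on the DAG representation), and the region/clock construction of~\cite{96-MITL} can be performed on the fly, emitting one location or edge at a time, since each component is described by polynomially many bits; so the whole {\TAls} is constructible in {\PSPACE}.

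The main obstacle I anticipate is \emph{the bookkeeping in re-deriving the parametrized complexity of the~\cite{96-MITL} construction} --- i.e.\ isolating exactly how the number of clocks and the location count depend on the granularity $K$ as opposed to on the raw magnitude of the constants, and verifying that $\toOld$ does not secretly inflate the granularity (one must check each case of \aref{def:old}, and in particular that wrapping a subformula in $\ltlN = \Release{\cdot}{\cdot}[(0,\infty)]$ contributes only an unbounded-from-the-right interval, which falls in the ``cheap'' case of the construction). A secondary subtlety is making sure the $\toOld$-transformed formula is still an {\MITL} formula (no singleton intervals) --- this is immediate since $\toOld$ only reuses existing intervals and $(0,\infty)$ --- and that the {\NNF} shape needed by~\cite{96-MITL} is maintained, which is the content of the cited fact that $\toOld$ preserves {\NNF}. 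Everything else is routine scaling and substitution into known bounds.
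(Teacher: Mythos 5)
There is a genuine gap, and it sits exactly where you flag the ``main obstacle'' as mere bookkeeping: your argument presupposes that the construction of~\cite{96-MITL} already has complexity parametrized by the granularity $K=\iSup{\II}/\Norm{\II}$, so that plugging in $K\leq n$ finishes the proof. It does not. As recalled in \aref{sec:MITLvsMITLzi}, that construction fixes observation intervals $\LclRop{0,1},\LclRop{1,2},\ldots$, uses up to $4$ clocks per unit interval for each \aref{typ:nfUntil}/\aref{typ:nfRelease} subformula, and can only free a clock $\iSup{\II}$ time units after it is reset; hence its clock count is $4\iSup{\II}$ per subformula --- it depends on the raw magnitude of the constants, not on the granularity. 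The entire content of \aref{sec:witness} is to establish that one may instead use observation intervals of width $\Step[\II]$: this rests on the new finite-variability theorems (\aref{thm:until-fv}, \aref{thm:release-fv}), which show that within $\LclRop{0,\Step[\II]}$ the truth set of an $\UntilOp$- (resp.\ $\ReleaseOp$-) formula under $\NSat$ is a union of at most $2$ (resp.\ $4$) intervals, each certified by a guessable witness point or interval (\aref{thm:until-witness-point}, \aref{thm:release-witness-point}). Only then does the number of simultaneous proof obligations drop to $\Ceil{\iInf{\II}/\Step[\II]}+1$, which the \MITLwi\ condition bounds by $\BigO[n]$. These theorems are new, are proved under the new semantics of $\ReleaseOp$ (which is why four intervals are needed rather than two), and cannot be read off from~\cite{96-MITL}.

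Your preprocessing step compounds the problem: clearing denominators first and then running the unit-interval construction is precisely the move the paper identifies as fatal in \aref{sec:witness} (the $\ltlG_{\Closed{1,2}}\ltlF_{\Closed{0.01,0.02}}\varphi'$ example): after scaling, $\iSup{\II}$ can be $2^{\BigO[n]}$, so the unit-interval construction needs exponentially many clocks, violating item~3 of the theorem. The paper instead keeps rational endpoints for the observation intervals and defers integer scaling to the region-graph stage, where it only affects the constants (item~4), whose influence on emptiness checking is exponentially milder than that of clocks. The parts of your proposal routed through $\nnf$, $\toOld$, \aref{thm:duality}, \aref{lem:old2new}, and \aref{prop:subsize} are correct and match the paper; what is missing is the replacement construction itself and the witness machinery that justifies it.
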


The proof of this result will be presented over the course of this
section, but it is worth noting that \aref{thm:MITLwi} immediately
gives a \PSPACE\ algorithm for satisfiability and model checking of
\MITLwi.
\begin{corollary}\label{cor:MITLwi}
Model checking and satisfiability problems for \MITLwi\ is \PSPACEcomp.
\begin{proof}
Being in \PSPACE\ is an immediate consequence of \aref{thm:MITLwi},
and \PSPACEhard ness follows from the \PSPACEhard ness of \MITLzi.
\end{proof}
\end{corollary}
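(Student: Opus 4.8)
The plan is to prove \aref{cor:MITLwi} as a short consequence of \aref{thm:MITLwi} together with the known hardness for the smaller fragment \MITLzi. The statement has two halves---membership in \PSPACE\ and \PSPACE-hardness---and I would dispatch each separately, keeping in mind that the corollary concerns \emph{both} the model checking and the satisfiability problems.

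For the upper bound, I would invoke \aref{thm:MITLwi} directly. Given a \MITLwi\ formula $\varphi$ of size $n$, that theorem produces a \TAls\ $\Sem{\varphi}$ whose language (restricted to finite variable signals) is exactly the set of models of $\varphi$, and, crucially, it guarantees that $\Sem{\varphi}$ can be \emph{constructed in polynomial space} from $\varphi$, with at most $2^{\BigO[n^2]}$ locations/edges, $\BigO[n^2]$ clocks, and $\BigO[n]$ integer constants each bounded by $2^{\BigO[n]}$. Satisfiability of $\varphi$ then amounts to checking nonemptiness of $\Sem{\varphi}$, and model checking a \TAls\ $\TTT$ against $\varphi$ amounts to checking emptiness of the intersection $\TTT\cap\Sem{\neg\varphi}$ (noting that $\neg\varphi$ is again a \MITLwi\ formula of polynomial size, so the same construction applies to it). Since the region-automaton emptiness and intersection-emptiness problems for timed automata are decidable in space polynomial in the number of clocks and logarithmic in the largest constant, and since those parameters are polynomially bounded here, the whole procedure runs in \PSPACE. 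The only care needed is to confirm that one never writes down the exponentially many locations of $\Sem{\varphi}$ explicitly; instead one explores its region graph on the fly, which is exactly what the ``constructed in polynomial space'' clause of \aref{thm:MITLwi} licenses.

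For the lower bound, I would argue by containment: every \MITLzi\ formula is also a \MITLwi\ formula (as observed in the text just before \aref{thm:MITLwi}, the conditions $\iInf{\II}=0$ or $\iSup{\II}=\infty$ defining \MITLzi\ are literally the first two disjuncts in the definition of \MITLwi). Consequently the satisfiability and model checking problems for \MITLzi\ reduce to those for \MITLwi\ under the identity map, which is trivially computable in logarithmic space. Since satisfiability and model checking for \MITLzi\ are already \PSPACE-hard (from~\cite{96-MITL}, and restated in \aref{cor:complexity}), the same hardness transfers to \MITLwi.

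The combination of the two bounds gives \PSPACE-completeness for both problems, which is the assertion of the corollary. I do not expect any genuine obstacle here, since all the heavy lifting is isolated in \aref{thm:MITLwi}; the main point requiring a line of justification is simply that the exponential blow-up in the number of locations does not violate the \PSPACE\ budget, which is handled by the on-the-fly region exploration rather than explicit construction of the automaton.
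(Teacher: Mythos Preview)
Your proposal is correct and follows exactly the paper's approach: membership in \PSPACE\ from \aref{thm:MITLwi} (via on-the-fly region exploration of $\Sem{\varphi}$ or of $\TTT\cap\Sem{\neg\varphi}$), and hardness from the syntactic inclusion \MITLzi\ $\subseteq$ \MITLwi. You have simply spelled out the details behind the paper's two-sentence proof, including the observation that $\neg\varphi$ remains in \MITLwi\ and that the automaton's parameters keep the region-graph exploration within polynomial space.
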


The rest of this section is devoted to proving \aref{thm:MITLwi}. We
begin by highlighting the crucial features of \MITLzi\ that make it
easier to decide than \MITL\ (\aref{sec:MITLvsMITLzi}). In
\aref{sec:witness}, we sketch the proof of \aref{thm:MITLwi}, by
drawing on the observations in \aref{sec:MITLvsMITLzi}.

%--------------------------------------------------------------------------------------------------------------------------------------------------------------
%--------------------------------------------------------------------------------------------------------------------------------------------------------------
%--------------------------------------------------------------------------------------------------------------------------------------------------------------
\subsection{\MITL\ vs. \MITLzi}\label{sec:MITLvsMITLzi}

The algorithm (from~\cite{96-MITL}) for constructing a \TAls\ for a
\MITL\ formula $\varphi$ applies a series of syntactic transformations
to $\varphi$ such that the resulting formula 
\begin{inparaenum}
\item is in negated normal form,
\item has at most linearly more distinct subformulas,
\item has the same maximum constant as the original formula, and most
  importantly,
\item is in the normal form given in \aref{def:nf}.
\end{inparaenum}
These transformations can be carried out in polynomial time, and the
construction of the \TAls\ assumes that the \MITL\ formula is in the
normal form given by \aref{def:nf} below.
\begin{definition}[Normal Form~\mbox{\cite[\Def~4.1]{96-MITL}}]\label{def:nf}
\MITL\ formula $\varphi$ is said to be in {\em normal form} \iFF\ it is built from propositions and negated propositions using conjunction,
disjunction, and temporal subformulas of the following six types:
\vfil\penalty999\vfilneg%For some reason, page will break here if you comment this line
\begin{multicols}{2}
\begin{myenums}
\item\label{typ:nfFinally} $\ltlF_\II\varphi'$ with $\IsLOpen{\II}$, $\iInf{\II}=0$, and $\iSup{\II}\in\Real$,
\item\label{typ:nfAlways}  $\ltlG_\II\varphi'$ with $\IsLOpen{\II}$, $\iInf{\II}=0$, and $\iSup{\II}\in\Real$,
\item\label{typ:nfAllTime} $\ltlG\varphi'$.
\item\label{typ:nfUntil}   $\Until  {\varphi_1}{\varphi_2}[\II]$ with $\iInf{\II}>0$, and $\iSup{\II}\in\Real$,
\item\label{typ:nfRelease} $\Release{\varphi_1}{\varphi_2}[\II]$ with $\iInf{\II}>0$, and $\iSup{\II}\in\Real$,
\item\label{typ:nfUntil2}  $\Until{\varphi_1}{\varphi_2}$,
\end{myenums}
\end{multicols}
\end{definition}

The main challenge (in terms of complexity) is in handling formulas of
\aref{typ:nfUntil} and \aref{typ:nfRelease}. If the formula you start
with is in {\MITLzi} then it can be seen that the normal form does not
have any subformulas of \aref{typ:nfUntil} and
\aref{typ:nfRelease}. Hence, the \TAls\ constructed for
\MITLzi\ formulas is ``small'', which results in \PSPACE\ decision
procedures.

To see the difficulty of transforming \aref{typ:nfUntil} and
\aref{typ:nfRelease} formulas into \TAlp, consider
$\ltlG_{\Opened{3,4}}\Paren{p\rightarrow\ltlF_{\Opened{1,2}}q}$.
Intuitively, the formula says, during the $4$\textsuperscript{th} unit
of time, every $p$ should be followed by a $q$ within $1$ to $2$ units
of time.  A na\"ive approach, starts and dedicates a clock after
seeing every $p$ during $4$\textsuperscript{th} unit of time, and uses
that clock to ensure that there will be at least one $q$, $1$ to $2$
units of time after the corresponding $p$ was seen.  However, this
approach does not work, since there is no bound on number of $p$ that
one can expect to see during any period of time, which makes number of
required clocks infinite.

Instead, the construction in~\cite{96-MITL} divides $\nnReal$ into
$\LclRop{0,1},\LclRop{1,2},\ldots$ intervals. Two important facts are
central to the construction.
\begin{inparaenum}
\item For any interval $\LclRop{n,n+1}$ and any \aref{typ:nfUntil} $\UntilOp$ or \aref{typ:nfRelease} $\ReleaseOp$ formula $\varphi$, 
			the subset of times in $\LclRop{n,n+1}$ for which $\varphi$ is true is exactly union of two possibly empty intervals.
			Using this property, for each interval $\LclRop{n,n+1}$, we first guess those two intervals and then use at most $4$ clocks to verify our guess.
\item We can start reusing a clock at most $\iSup{\II}$ units of time after we started using it.
			Therefore, total number of clocks required for checking each \aref{typ:nfUntil} and \aref{typ:nfRelease} formula is bounded by $4\iSup{\II}$.
\end{inparaenum}
Since $\iSup{\II}$ could be exponentially big, the resulting \TAls\ could
have exponentially many clocks~\footnote{The construction
  in~\cite{96-MITL}, keeps track of the subset of clocks that are free
  (\ie\ can be reused) in the discrete modes of the \TAls. This makes
  the number of locations doubly exponential. However, it is possible
  to reuse clocks in a queue like fashion and instead of encoding a
  subset of free clocks in discrete modes, one can just encode the
  index of the next free clock.  This approach exponentially decreases
  number of required discrete modes. This optimization however does
  not change the asymptotic complexity.}.

%--------------------------------------------------------------------------------------------------------------------------------------------------------------
%--------------------------------------------------------------------------------------------------------------------------------------------------------------
%--------------------------------------------------------------------------------------------------------------------------------------------------------------
\subsection{Witness Points and Intervals}\label{sec:witness}

Let us define \emph{step size} of an interval $\II$ as follows.
\begin{equation}\label{eq:step}
\Step[\II]\defEQ\left\{
\begin{array}{@{}l@{\ \ \ }l@{}}
        \iSup{\II}-\iInf{\II} & \mbox{if }\iSup{\II}-\iInf{\II}<\iInf{\II} \\
        \iInf{\II}            & \mbox{otherwise}
\end{array}
\right.
\end{equation}
The crucial observation needed to prove \aref{thm:MITLwi} is that the
truth of \aref{typ:nfUntil} and \aref{typ:nfRelease} does not change
very frequently. We show that for a bounded non-empty interval $\II$
with $\iInf{\II} > 0$, using constantly many clocks, the \TAls\ can
monitor the truth of a formula of \aref{typ:nfUntil} or
\aref{typ:nfRelease} for intervals
$\LclRop{0,c},\LclRop{c,2c},\ldots$, where $c\defEQ\Step[\II]$,
instead of intervals $\LclRop{0,1},\LclRop{1,2},\ldots$ as in the
construction given in~\cite{96-MITL}. This has two important
consequences.
\begin{myenums}
\item If a formula is in \MITLwi, then number of required clocks will
  be at most linear in the size of formula.  For example, verifying
  $\ltlG_{\Closed{n,2n}}\varphi$ requires constant number of clocks,
  as opposed to exponentially many clocks in~\cite{96-MITL}.
\item Consider satisfiability of
  $\varphi\defEQ\ltlG_{\Closed{1,2}}\ltlF_{\Closed{0.01,0.02}}\varphi'$
  formula.  The algorithm in~\cite{96-MITL}, first changes $\varphi$
  to an ``equivalent'' formula
  $\varphi\defEQ\ltlG_{\Closed{100,200}}\ltlF_{\Closed{1,2}}\varphi'$,
  because if observation intervals are
  $\LclRop{0,1},\LclRop{1,2},\ldots$ then all constants in the input
  formula must be natural numbers.  Therefore, \TAls\ will have
  hundreds of clocks.  However, we show there is no need for
  observation intervals to have natural numbers as endpoints. This
  means that the \TAls\ for $\varphi$ requires at most $8$ clocks for each
  of $\ltlG_{\Closed{1,2}}$ and $\ltlF_{\Closed{0.01,0.02}}$
  sub-formulas. Note that the algorithm to check emptiness of
  \TAlp\ will replace all rational numbers by natural numbers by
  scaling, when constructing the region graph~\cite{94-TA}. However,
  in spite of this, it is worth observing that the complexity of
  emptiness checking of \TAlp\ has an exponentially worse dependence
  on the number of clocks, than on
  constants~\cite[\Lem~4.5]{94-TA}. Thus, our observations may lead to
  better running times in practice even for {\MITLzi}.
\end{myenums}

%\setcounter{secnumdepth}{3}
%--------------------------------------------------------------------------------------------------------------------------------------------------------------
%--------------------------------------------------------------------------------------------------------------------------------------------------------------
\subsubsection{Witness Points for \UntilOp\ Operators.}\label{sec:Uwitness}

For the rest of this section, let us fix an arbitrary signal
$\Sgn$. We begin by presenting some technical definitions of
``witnesses'' that demonstrate when an $\UntilOp$-formula is
satisfied.
\begin{definition}[Witness Sets for \UntilOp]\label{def:until-witness}
For every \MTL\ formulas $\varphi_1$ and $\varphi_2$, and
$i\oftype\Brace{1,2,3}$, we define
$\Uwitness^i\Paren{\varphi_1,\varphi_2}$ to be a subset of $\nnReal^2$
defined by the following predicates over $\Paren{r,w}$:
\begin{myenums}
\item $r\leq w \wedge \Paren*{\forall
  t\oftype\Paren{r,w}\WeHave\Sgn^t\NSat\varphi_1}\wedge\Sgn^w\NSat\varphi_2$
\item $\begin{array}[t]{@{}l@{}l@{}}
  r<w \wedge \exists\epsilon\oftype\pReal\SuchThat
  \Paren*{\forall t\oftype\Paren{r,w}\WeHave\Sgn^t\NSat\varphi_1} & \wedge
  \Paren*{\forall t\oftype\Paren{w\Minos\epsilon ,w}\WeHave\Sgn^t\NSat\varphi_2}
			\end{array}$
\item $\begin{array}[t]{@{}l@{}l@{}}
  r<w \wedge \exists\epsilon\oftype\pReal\SuchThat
  \Paren*{\forall t\oftype\Paren{\makebox[0pt][l]{$r,w+\epsilon$}\phantom{r,w+\epsilon}}\WeHave\Sgn^t\NSat\varphi_1} & \wedge
  \Paren*{\forall t\oftype\Paren{w,w+\epsilon }\WeHave\Sgn^t\NSat\varphi_2}
  \end{array}$
\end{myenums}
\end{definition}
Notice, that if $(r,w)$ is in any of the witness sets given in
\aref{def:until-witness}, then it provides proof that certain until
formulas are true. This is captured by the definition of proof sets,
given next.
\begin{definition}[Proof Sets for \UntilOp]\label{def:until-proofset}
For every \MTL\ formulas $\varphi_1$ and $\varphi_2$, times
$r,w\oftype\nnReal$, interval $\II\oftype\nnII$, and
$i\oftype\Brace{1,2,3}$, we define
$\Uproofset^i\Paren{\varphi_1,\varphi_2,r,w,\II}$ to be a subset of
$\nnReal$ defined by the following predicates over $t$:
\begin{myenums}
\item $\Uwitness^1\Paren{r,w}\wedge r\leq t\wedge w-t\in\II$
\item $\Uwitness^2\Paren{r,w}\wedge r\leq t\wedge w-t\in\OpenLCloseR{\II}$
\item $\Uwitness^3\Paren{r,w}\wedge r\leq t\wedge w-t\in\OpenRCloseL{\II}$
\end{myenums}
\end{definition}
A proof set $\Uproofset^i\Paren{\varphi_1,\varphi_2,r,w,\II}$
establishes the fact that $\Until{\varphi_1}{\varphi_2}[\II]$ is true
at time $r$ in signal $\Sgn$. This is proved next.
\newcommand{\RESUProofSet}{%
For any \MTL\ formulas $\varphi_1$ and $\varphi_2$,
				times $r,w\oftype\nnReal$,
				interval $\II\oftype\nnII$, 
				$i\oftype\Brace{1,2,3}$, and
				$t\oftype\Uproofset^i\Paren{\varphi_1,\varphi_2,r,w,\II}$
we have $\Sgn^t\NSat\Until{\varphi_1}{\varphi_2}[\II]$.}
\PVSaddress{\PVSLabel{mtl\_witness}{def\_until\_witness\_\{1,2,3\}\_proofset}}
\begin{pvsproposition}[Proof Sets for \UntilOp]\label{prop:until-proofset}
\RESUProofSet
\end{pvsproposition}

In \aref{prop:until-proofset}, the signal $\Sgn$ need not be finitely
variable. Also, the formulas $\varphi_1,\varphi_2$ could be any
\MTL\ formulas. The truth of $\Until{\varphi_1}{\varphi_2}[\II]$
within $\LclRop{0,\Step[\II]}$ changes only finitely many times. This
crucial observation helps limit the number of clocks needed to monitor
the truth of $\UntilOp$-subformulas.
\newcommand{\RESUfvar}{% 
For any \MTL\ formulas $\varphi_1$ and $\varphi_2$, and interval
$\II\oftype\Brace{\II\oftype\allII\filter\iInf{\II},\iSup{\II}\in\pReal}$,
there are two intervals $T_1\oftype\nnII$ and $T_2\oftype\nnII$ with
the following properties:
\begin{myitems}
\item $\forall t_1\oftype T_1,t_2\oftype T_2\WeHave t_1<t_2$, and
\item $\forall t\oftype\nnReal\WeHave \Paren*{t<\Step[\II] \wedge
  \Sgn^t\NSat\Until{\varphi_1}{\varphi_2}[\II]}
  \Leftrightarrow\Paren{t\in T_1\cup T_2}$
\end{myitems}}
\PVSaddress{\PVSLabel{mtl\_witness}{until\_witness\_interval\_2}}
\begin{pvstheorem}[Finite Variability of \UntilOp]\label{thm:until-fv}
\RESUfvar
\end{pvstheorem}

It is worth noting that \aref{thm:until-fv} is not restricted to
\MITL\ or to finite variable signals. Since within
$\LclRop{0,\Step[\II]}$, the times when
$\Until{\varphi_1}{\varphi_2}[\II]$ is true can be partitioned into two
intervals, suggests that a \TAls\ checking this property can just
guess these intervals. But how can such intervals be guessed?
\aref{def:until-witness} provides an answer. These observations are
combined in the next theorem, to identify what the \TAls\ needs to
guess and check for $\UntilOp$-formulas.
\newcommand{\RESUwitness}{%
In \aref{thm:until-fv}, if $\fvar[\Sgn]$ then $T_1$ and $T_2$ have the following properties:
\begin{myitems}
\item If $T_1\neq\emptyset$ then there are $w_1\oftype\pReal$ and $i\oftype\Brace{1,2}$ such that:
			\begin{myenums}
			\item if $i=1$ then $w_1-\iInf{T_1}\in\II$, otherwise, $w_1-\iInf{T_1}\in\CloseR{\II}$
			\item $\Paren{\iInf{T_1},w_1}\in\Uwitness^i\Paren{\varphi_1,\varphi_2}$
			\item $T_1\subseteq\Uproofset^i\Paren{\varphi_1,\varphi_2,\iInf{T_1},w_1}$
			\end{myenums}
\item If $T_2\neq\emptyset$ then there are $w_2\oftype\pReal$ and $i\oftype\Brace{1,3}$ such that:
			\begin{myenums}
			\item $w_2-\iInf{T_2}\in\CloseR{\II}$
			\item $\Paren{\iInf{T_2},w_2}\in\Uwitness^i\Paren{\varphi_1,\varphi_2}$
			\item $T_2\subseteq\Uproofset^i\Paren{\varphi_1,\varphi_2,\iInf{T_2},w_2}$
			\end{myenums}
\end{myitems}}
\PVSaddress{\PVSLabel{mtl\_witness}{until\_witness\_interval\_3}}
\begin{pvstheorem}[Witness Point for \UntilOp]\label{thm:until-witness-point}
\RESUwitness
\end{pvstheorem}

In \aref{thm:until-witness-point}, the
	1\textsuperscript{st} property bounds possible values of $w_i$, and hence bounds possible values that should be guessed by \TAls.
	The 2\textsuperscript{nd} property specifies what $w_i$ should satisfy (\ie\ what \TAls\ should verify about the guess), and
	the 3\textsuperscript{rd} property states that $w_i$ is enough for proving that $\Until{\varphi_1}{\varphi_2}[\II]$ is satisfied by $\Sgn$ at all times in
	$T_i$.

%--------------------------------------------------------------------------------------------------------------------------------------------------------------
%--------------------------------------------------------------------------------------------------------------------------------------------------------------
\subsubsection{Witness Intervals for \ReleaseOp\ Operators.}\label{sec:Rwitness}
We now identify how a timed automaton can check
$\ReleaseOp$-formulas. We will repeat the steps from the previous
section. We will identify witness intervals, and proof sets for
$\ReleaseOp$-formulas. As in the case of $\UntilOp$, these provide
proofs of when a $\ReleaseOp$ formula is true.
\begin{definition}[Witness Interval for \ReleaseOp]\label{def:release-witness}
For every \MTL\ formulas $\varphi_1$ and $\varphi_2$, and 
					$i\oftype\Brace{1,\ldots,4}$, 
we define $\Rwitness^i\Paren{\varphi_1,\varphi_2}$ to be a subset of $\IIc$ defined by the following predicates over \II:
\[
\setcounter{enumi}{0}
\newcommand{\Bullet}{\stepcounter{enumi}\mbox{\labelenumi}}
\begin{array}{@{}l@{\hspace{\labelsep}}l@{}l@{}l@{}}
\Bullet&\multicolumn{3}{@{}l@{}}{\forall t\oftype\II\WeHave\Sgn^t\NSat\varphi_2}\\
\Bullet&				\II &\neq\emptyset\wedge\IsLClosed{\II}	&\wedge\forall t\oftype\II\WeHave\Sgn^t\NSat\varphi_1\\
\Bullet&				\II &\neq\emptyset\wedge\IsRClosed{\II}	&\wedge\forall t\oftype\II\WeHave\Sgn^t\NSat\varphi_2\wedge\Sgn^{\iSup{\II}}\NSat\varphi_1\\
\Bullet&\CloseL{\II}&\neq\emptyset\wedge\IsRClosed{\II}	&
			\wedge\forall t\oftype\II\WeHave\Sgn^t\NSat\varphi_2\wedge\exists\epsilon\oftype\pReal\SuchThat
			 \forall t\oftype\Paren{\iSup{\II},\iSup{\II}+\epsilon}\WeHave
			 \Sgn^t\NSat\varphi_1
\end{array}	
\]
\end{definition}

\begin{definition}[Proof Sets for \ReleaseOp]\label{def:release-proofset}
For every \MTL\ formulas $\varphi_1$ and $\varphi_2$,
          intervals $\II,\JJ\oftype\nnII$, and
          $i\oftype\Brace{1,2,3,4}$, 
we define $\Uproofset^i\Paren{\varphi_1,\varphi_2,\II,\JJ}$ to be a subset of $\nnReal$ defined by the following predicates over $t\oftype\nnReal$:
\[
\setcounter{enumi}{0}
\newcommand{\Bullet}{\stepcounter{enumi}\mbox{\labelenumi}}
\begin{array}{@{}l@{\hspace{\labelsep}}l@{}l@{}l@{}}
\Bullet&	\II\in\Rwitness^1\Paren{\varphi_1,\varphi_2}\wedge&	t+\JJ\subseteq\II\\
\Bullet&	\II\in\Rwitness^2\Paren{\varphi_1,\varphi_2}\wedge& t+\JJ\subseteq\Paren{\iInf{\II},\infty} &\wedge	t<   \iSup{\II}\wedge\iInf{\JJ}>0\\
\Bullet&	\II\in\Rwitness^3\Paren{\varphi_1,\varphi_2}\wedge&	t+\JJ\subseteq\II+\nnReal								&\wedge	t<   \iSup{\II}\\
\Bullet&	\II\in\Rwitness^4\Paren{\varphi_1,\varphi_2}\wedge& t+\JJ\subseteq\II+\nnReal								&\wedge	t\leq\iSup{\II}
\end{array}
\]
\end{definition}

\PVSaddress{\PVSLabel{mtl\_witness}{def\_release\_witness\_\{1,2,3,4\}\_proofset}}
\begin{pvsproposition}[Proof Sets for \ReleaseOp]\label{prop:release-proofset}
For any \MTL\ formulas $\varphi_1$ and $\varphi_2$,
        intervals $\II,\JJ\oftype\nnII$,
        $i\oftype\Brace{1,2,3,4}$, and
				$t\oftype\Rproofset^i\Paren{\varphi_1,\varphi_2,\II,\JJ}$
we have $\Sgn^t\NSat\Release{\varphi_1}{\varphi_2}[\JJ]$.				
\end{pvsproposition}

Like $\UntilOp$-formulas, a formula
$\Release{\varphi_1}{\varphi_2}[\II]$ changes its truth only finitely
many times in the interval $\LclRop{0,\Step[\II]}$.
\PVSaddress{\label{ft:release}\PVSLabel{mtl\_witness}{release\_witness\_interval\_1}}
\begin{pvstheorem}[Finite Variability of \ReleaseOp]\label{thm:release-fv}
For any \MTL\ formulas $\varphi_1$ and $\varphi_2$, and
        non-empty positive bounded interval \II, 
there are four intervals $T_1,\ldots,T_4$ with the following properties:
\setlength{\labelsep}{0.pt}
\begin{myitems}
\item $\forall i,j\oftype\Brace{1,\ldots,4}, t_i\oftype T_i,t_j\oftype T_j\WeHave i<j\Rightarrow t_i<t_j$, and
\item $\forall t\oftype\nnReal\WeHave \Paren*{t<\Step[\II]\wedge\Sgn^t\NSat\Release{\varphi_1}{\varphi_2}[\II]}\Leftrightarrow
	\Paren{t\in\bigcup_{i\oftype{1,\ldots,4}} T_i}$
\end{myitems}
\end{pvstheorem}

Like in \aref{thm:until-witness-point}, we can combine
\aref{thm:release-fv} and \aref{def:release-witness} to come up with
how a \TAls\ can check such $\ReleaseOp$ formulas.
\PVSaddress*{ft:release}
\begin{pvstheorem}[Witness Interval for \ReleaseOp]\label{thm:release-witness-point}
In \aref{thm:release-fv}, if $\fvar[\Sgn]$ then $T_1,\ldots,T_4$ have the following properties:
\begin{multicols}{2}
\begin{myitems}
\item If $T_1\neq\emptyset$ then $\exists\II\oftype\IIc$ such that:
			\begin{myenums}
			\item $\II\subseteq\Paren{0,\Step[\JJ]}$
			\item $\Rwitness^2\Paren{\varphi_1,\varphi_2,\II}$
			\item $T_1\subseteq\Rproofset^2\Paren{\varphi_1,\varphi_2,\II,\JJ}$
			\end{myenums}
\item If $T_2\neq\emptyset$ then $\exists\II\oftype\IIc$ such that:
			\begin{myenums}
			\item $\II\subseteq\LclRop{\Step[\JJ],\Step[\JJ]+\iSup{\JJ}}$
			\item $\Rwitness^1\Paren{\varphi_1,\varphi_2,\II}$
			\item $T_2\subseteq\Rproofset^1\Paren{\varphi_1,\varphi_2,\II,\JJ}$\\
			\end{myenums}
\item If $T_3\neq\emptyset$ then $\exists\II\oftype\IIc$ such that:
			\begin{myenums}
			\item $\II\subseteq\LclRop{\Step[\JJ],\Step[\JJ]+\iSup{\JJ}}$
			\item $\Rwitness^4\Paren{\varphi_1,\varphi_2,\II}$
			\item $T_3\subseteq\Rproofset^4\Paren{\varphi_1,\varphi_2,\II,\JJ}$
			\end{myenums}
\item If $T_4\neq\emptyset$ then $\exists\II\oftype\IIc,i\oftype\Brace{2,3,4}$ such that:
			\begin{myenums}
			\item $\II\subseteq\LclRop{\Step[\JJ],\Step[\JJ]+\iSup{\JJ}}$
			\item $\Rwitness^i\Paren{\varphi_1,\varphi_2,\II}$
			\item $T_4\subseteq\Rproofset^i\Paren{\varphi_1,\varphi_2,\II,\JJ}$
			\end{myenums}
\end{myitems}
\end{multicols}
\end{pvstheorem}

\subsubsection{Constructing a \TAls\ for \MITLwi.}
One can use \aref{thm:until-witness-point} and
\aref{thm:release-witness-point} and follow the same ideas outlined
in~\cite{96-MITL} to construct a \TAls\ for \aref{typ:nfUntil} and
\aref{typ:nfRelease} formulas.  Let us outline how this works for
\aref{typ:nfUntil} formulas. If the automaton guesses that $T_1$ is not
empty then it must make this guess at time exactly $\iInf{T_1}$. At
the same time, the automaton takes two more actions: First, it resets
a {\em free} clock $x$ and remembers that this clock is not free
anymore.  Second, it guesses whether $i$ should be $1$ or
$2$. Suppose, $i$ is chosen to be $1$.  As long as $x$ is not free, the
automaton makes sure that the input signal satisfies $\varphi_1$.
Note that this is a different proof obligation and will be considered
by an induction on the structure of input formula.
At the same time or at some time later, the automaton should guess
whether the current time is $\iSup{T_1}$.  At any point in time, if
the automaton does not make that call (\ie\ decides the current time
is not $\iSup{T_1}$), it means the automaton wants to prove that the
input signal satisfies the \UntilOp\ formula at all points in time
between $\iInf{T_1}$ and sometime in the future.
As soon as the automaton guesses that the current time is
$\iSup{T_1}$, it resets a new free clock $y$ and marks it as non-free.
The automaton then makes sure that when current values of $x$ and $y$
belong to $\II$, the input signal satisfies $\varphi_2$ at least once.
As soon as $\varphi_2$ becomes true during during this period, the
proof obligation is over and $x$ and $y$ will both be marked as free
clocks (note that $\varphi_1$ does not need to be true when
$\varphi_2$ becomes true).
Using \aref{thm:until-witness-point}, we know what the automaton
checks, guarantees $\forall t\oftype T_1\WeHave
f^t\NSat\Until{\varphi_1}{\varphi_2}[\II]$.  However, the automaton
has only finitely many clocks and it cannot reuse a clock while it is
not free.  The significance of \aref{thm:until-fv} is that it
guarantees simultaneously guessing and proving at most
$\Ceil{\frac{\iInf{\II}}{\Step[\II]}}+1$ number of $T_1$, $T_2$
intervals is enough. Since clocks $x$ and $y$ will be freed at most
$\iSup{\II}$ units of time after they became non-free, number of
required clocks for each \aref{typ:nfUntil} formula will be only twice
the number of simultaneous proof obligations.  The same argument holds
for \ReleaseOp\ operators, except that the automaton has to
simultaneously guess and prove at most
$\Ceil{\frac{\iInf{\II}}{\Step[\II]}}+1$ number of $T_1$, $T_2$,
$T_3$, $T_4$ intervals.

\section{Conclusion}\label{sec:conclusion}

We proved that the classical decision procedures for satisfiability
and model checking of {\MITL}~\cite{96-MITL} are incorrect. This is
because they rely on a semantics for the $\ReleaseOp$ operator which
is not the dual of $\UntilOp$. We give a new semantics of $\ReleaseOp$
and prove that it behaves like the dual of $\UntilOp$ over signals
that are finitely variable. Identifying the right semantics for
$\ReleaseOp$ is subtle as we show that it is not possible to give a
correct semantics using characterization that uses only two quantified
variables. Using the new semantics, we give a translation from {\MITL}
to {\TAlp} and thereby correcting the decision procedures for
{\MITL}. Finally, we also identify a fragment of {\MITL} called
{\MITLwi}, that is more expressive than {\MITLzi}, but nonetheless has
decision procedures in {\PSPACE}.

\bibliographystyle{plain}
\bibliography{biblio}

\begin{thebibliography}{10}

\bibitem{94-TA}
Rajeev Alur and David~L. Dill.
\newblock {A Theory of Timed Automata}.
\newblock {\em Theor. Comput. Sci.}, 126(2):183--235, 1994.

\bibitem{96-MITL}
Rajeev Alur, Tom\'{a}s Feder, and Thomas~A. Henzinger.
\newblock {The Benefits of Relaxing Punctuality}.
\newblock {\em J. ACM}, 43(1):116--146, 1996.

\bibitem{08-robust-coFlatMTL}
Patricia Bouyer, Nicolas Markey, and Pierre-Alain Reynier.
\newblock {Robust Analysis of Timed Automata Via Channel Machines}.
\newblock In {\em FoSSaCS 2008}.

\bibitem{13-MITL2ATA}
Thomas Brihaye, Morgane Esti{\'e}venart, and Gilles Geeraerts.
\newblock {On MITL and Alternating Timed Automata}.
\newblock In {\em FORMATS}, pages 47--61. Springer Berlin Heidelberg, 2013.

\bibitem{14-MITL2ATA}
Thomas Brihaye, Morgane Esti{\'e}venart, and Gilles Geeraerts.
\newblock {On MITL and Alternating Timed Automata over Infinite Words}.
\newblock In {\em FORMATS}, pages 69--84. Springer International Publishing,
  2014.

\bibitem{04-Logics4RT}
Yoram Hirshfeld and Alexander Rabinovich.
\newblock {Logics for Real Time: Decidability and Complexity}.
\newblock {\em Fundam. Inf.}, 62(1):1--28, 2004.

\bibitem{90-MTL}
Ron Koymans.
\newblock {Specifying real-time properties with metric temporal logic}.
\newblock {\em Real-Time Systems}, 2(4):255--299, 1990.

\bibitem{06-MITL2TA}
Oded Maler, Dejan Nickovic, and Amir Pnueli.
\newblock {From MITL to Timed Automata}.
\newblock In {\em FORMATS}, pages 274--289, 2006.

\bibitem{96-omega}
Madhavan Mukund.
\newblock {Finite-state Automata on Infinite Inputs}, 1996.

\bibitem{08-RecentMTL}
Jo\"{e}l Ouaknine and James Worrell.
\newblock {Some Recent Results in Metric Temporal Logic}.
\newblock In {\em FORMATS}, pages 1--13, 2008.

\bibitem{92-PVS}
Sam Owre, John~M. Rushby, and Natarajan Shankar.
\newblock {PVS: A Prototype Verification System}.
\newblock In {\em CADE}, pages 748--752, 1992.

\bibitem{77-LTL}
Amir Pnueli.
\newblock {The Temporal Logic of Programs}.
\newblock In {\em SFCS}, pages 46--57. IEEE Computer Society, 1977.

\end{thebibliography}

\appendix
\section{Proofs}

Most of our positive results have been already proven in \PVS\ and references to those proofs have already been given in the main part of this paper.
In this section, we put proof ideas for many of those results to give a flavor of what is done in those proofs.
\aref{prop:subsize} is not proved in \PVS, so we have its full proof here.
Also, \aref{thm:duality} is an immediate consequence of new \aref{lem:duality-1} and new \aref{lem:duality-2}, which we sketch their proofs here and refer the 
reader to \PVS\ for the full proofs.

%--------------------------------------------------------------------------------------------------------------------------------------------------------------
%--------------------------------------------------------------------------------------------------------------------------------------------------------------

\begin{replemma}{lem:fvar}
\RESfvar
\begin{proof}[\ProofIdea]
From right to left direction is trivially true.
The other direction can be proved using induction on the structure of $\varphi$.
The only interesting cases are when $\varphi$ is in the form of $\Until{\varphi_1}{\varphi_2}[\II]$ or $\Release{\varphi_1}{\varphi_2}[\II]$.
Either way, use induction hypothesis and let $\epsilon_1,\ldots,\epsilon_6\oftype\pReal$ be finite variability constants for $\varphi_1$ and
$\varphi_2$ at times $0$, $\iInf{\II}$, and $\iSup{\II}$ (if supremum is finite). Let $\epsilon\oftype\pReal$ be any value that is 
$<\min\Brace{\epsilon_1,\ldots,\epsilon_6}$ and prove the new satisfiability relation does not change its value during $\Paren{0,\epsilon}$.
\end{proof}
\end{replemma}

%--------------------------------------------------------------------------------------------------------------------------------------------------------------
%--------------------------------------------------------------------------------------------------------------------------------------------------------------

\PVSaddress{\PVSLabel{mtl}{sat\_implies\_nnfsat}}
\newcommand{\RESDualityOne}{%
  For any signal $\Sgn$ that is finitely variable from right and \MTL\ formula $\varphi$, we have $\Sgn\NSat\varphi$ implies $\Sgn\NSat\nnf[\varphi]$.}
\begin{pvslemma}[Duality-1]\label{lem:duality-1}
\RESDualityOne
%\end{pvslemma}
%
%\begin{replemma}{lem:duality-1}
%\RESDualityOne
\begin{proof}[\ProofSketch]
Proof is by induction on the structure of $\varphi$.
We prove the case $\varphi\defEQ\neg\Paren{\Release{\varphi_1}{\varphi_2}[\II]}$ here.
We know all the following conditions are true:
\[
\begin{array}{@{}l@{}}
  \exists t_1\oftype\II\SuchThat\Sgn^{t_1}\NSat\neg\varphi_2  
  \\
  \forall t_1\oftype\pReal\WeHave\Paren*{\Sgn^{t_1}\NSat\varphi_1}\Rightarrow
  \exists t_2\oftype\Braket{0,t_1}\cap\II\SuchThat\Sgn^{t_2}\NSat\neg\varphi_2
  \\
  \forall t_1\oftype\CloseL{\II},t_2\oftype\II\cap\Paren{t_1,\infty}\WeHave\exists t_3\oftype\II\SuchThat
    \Paren{t_3\leq t_1\wedge\Sgn^{t_3}\NSat\neg\varphi_2}\vee\\
    \multicolumn{1}{@{}r@{}}{\Paren{t_1 < t_3\leq t_2\wedge\Sgn^{t_3}\NSat\neg\varphi_1}}
\end{array}
\]
Let $A\defEQ\Brace{t\oftype\II\filter\Sgn^t\NSat\neg\varphi_2}$ and $t_1\defEQ\inf A$.
Pick any $t\oftype\pReal$ and assume $\Sgn^t\NSat\varphi_1$ is true
(if no such $t$ exists then the first condition immediately give us $\Sgn\NSat\Until{\neg\varphi_1}{\neg\varphi_2}[\II]$).
Because of the second condition, there must be a $t'\oftype\Braket{0,t}\cap\II$ such that $\Sgn^{t'}\NSat\neg\varphi_2$.
Therefore, $t_1\leq t'$ and hence $t_1\leq t$.
This means $\forall t_2\oftype\Paren{0,t_1}\WeHave\Sgn^{t_2}\NSat\neg\varphi_1$.
\begin{myitems}
\item If $\Paren{\Sgn^{t_1}\NSat\neg\varphi_2}\wedge t_1\in\II$ then $\Sgn\NSat\Until{\neg\varphi_1}{\neg\varphi_2}[\II]$ is true.
\item If $\Paren{\Sgn^{t_1}\NSat    \varphi_2}\wedge t_1\in\II$ then 
      fix $t_1$ in the third condition and, knowing $\II\cap\Paren{t_1,\infty}\neq\emptyset$, let $t_2\oftype\II\cap\Paren{t_1,\infty}$ be an 
      arbitrary element.
      For some $t_3\oftype\II$ we have
      $\Paren{t_3\leq t_1\wedge\Sgn^{t_3}\NSat\neg\varphi_2}\vee\Paren{t_1 < t_3\leq t_2\wedge\Sgn^{t_3}\NSat\neg\varphi_1}$.
      Since $t_1=\inf A$  the left disjunct is false. Therefore, we must have 
      $\Paren{t_1 < t_3\leq t_2\wedge\Sgn^{t_3}\NSat\neg\varphi_1}$.
      Since we can make $t_2$ arbitrary close to $t_1$, using $\fvarR[\Sgn,\neg\varphi_1]$, \wLOG, we assume $t_2\oftype\II\cap\Paren{t_1,\infty}$ 
      is such that $\forall t_3\oftype\LopRcl{t_1,t_2}\WeHave\Sgn^{t_3}\NSat\neg\varphi_1$.
      \begin{myitems}
      \item If $\Sgn^{t_1}\NSat\neg\varphi_1$ then 
            $\forall t'\oftype\LopRcl{0,t_2}\WeHave\Sgn^{t'}\NSat\neg\varphi_1$, and since $t_1=\inf A$, we know 
            $\exists t\oftype\Braket{0,t_2}\cap\II\SuchThat\Sgn^{t}\NSat\neg\varphi_2$.
            Therefore, $\Sgn\NSat\Until{\neg\varphi_1}{\neg\varphi_2}[\II]$.
      \item If $\Sgn^{t_1}\NSat\varphi_1$ then because of the second condition there is $t''\oftype\Braket{0,t_1}\cap\II$ such that 
            $\Sgn^{t''}\NSat\neg\varphi_2$. But this is contradictory to the facts $t_1=\inf A$ and $\Sgn^{t_1}\NSat\varphi_2$.
      \end{myitems}
\item If $t_1\not\in\II$ then we know $t_1\in\CloseL{\II}$.
      Follow the steps of the previous case to obtain the same contradiction.
\end{myitems}
\end{proof}
%\end{replemma}
\end{pvslemma}

Looking at the proof of \Lem~\ref{lem:duality-1}, we see that for temporal operators 
$\Until{\varphi_1}{\varphi_2}[\II]$ and $\Release{\varphi_1}{\varphi_2}[\II]$, only right-side finite variability of $\varphi_1$ is used.
Therefore, as far as \Lem~\ref{lem:duality-1} and hence \Thm~\ref{thm:duality} are concerned, right-side finite variability of atomic propositions 
that are only appeared in the right hand side of temporal operators is not required.

%--------------------------------------------------------------------------------------------------------------------------------------------------------------
%--------------------------------------------------------------------------------------------------------------------------------------------------------------

\PVSaddress{\PVSLabel{mtl}{nnfsat\_implies\_sat}}
\newcommand{\RESDualityTwo}{%
  For any signal $\Sgn$ and \MTL\ formula $\varphi$, we have $\Sgn\NSat\nnf[\varphi]$ implies $\Sgn\NSat\varphi$.}
\begin{pvslemma}[Duality-2]\label{lem:duality-2}
\RESDualityTwo
%\end{pvslemma}
%
%\begin{replemma}{lem:duality-2}
%\RESDualityTwo
\begin{proof}[\ProofSketch]
Proof is by induction on the structure of $\varphi$.
We prove the case $\varphi\defEQ\neg\Paren{\Release{\varphi_1}{\varphi_2}[\II]}$ here.
If $\II=\emptyset$ then $\Sgn\NNSat\nnf[\varphi]$, therefore for the rest of this proof assume $\II\neq\emptyset$.
Let $t_1\oftype\II$ be any element that satisfies 
  $\Paren{\Sgn^{t_1}\NSat\neg\varphi_2}\wedge\forall t_2\oftype\Paren{0,t_1}\WeHave\Paren{\Sgn^{t_2}\NSat\neg\varphi_1}$.
We consider every case in \Def~\ref{def:mtl-newsem} for $\Sgn\NSat\Release{\varphi_1}{\varphi_2}[\II]$ and show 
they are all lead to contradiction.
\begin{myitems}
\item If $\forall t\oftype\II\WeHave\Sgn^t\NSat\varphi_2$ then 
      we have a contradiction since $t_1\in\II\wedge\Paren{\Sgn^{t_1}\NSat\neg\varphi_2}$.
\item If for some $t\oftype\pReal$ we have 
      $\Paren*{\Sgn^t\NSat\varphi_1}\wedge\forall t'\oftype\Braket{0,t}\cap\II\WeHave\Sgn^{t'}\NSat\varphi_2$ then 
      $t_1\leq t$, since $\forall t_2\oftype\Paren{0,t_1}\WeHave\Paren{\Sgn^{t_2}\NSat\neg\varphi_1}$.
      Knowing $t_1\in\II$ and $\forall t'\oftype\Braket{0,t}\cap\II\WeHave\Sgn^{t'}\NSat\varphi_2$, we reach to a contradiction
      $\Sgn^{t_1}\NSat\neg\varphi_2$.
\item If for some $t\oftype\CloseL{\II}$ and $t'\oftype\II\cap\Paren{t,\infty}$ we have
      $\forall t''\oftype\II\WeHave\Paren{t''\leq t\Implies\Sgn^{t''}\NSat\varphi_2}\wedge\Paren{t<t''\leq t'\Implies\Sgn^{t''}\NSat\varphi_1}$
      then 
      from  $t_1\in\II$, 
            $\Sgn^{t_1}\NSat\neg\varphi_2$, and 
            $\forall t''\oftype\II\cap\Braket{0,t}\WeHave\Sgn^{t''}\NSat\varphi_2$ 
      we conclude $t<t_1$.
      Knowing $t<t'$ and by setting $t''\defEQ\frac{1}{2}\Paren{t+\min\Brace{t_1,t'}}$ we have
      $t<t''<\min\Brace{t_1,t'}$ and hence $t''\in\II$.
      Therefore, $\Sgn^{t''}\NSat\varphi_1$ which is contrary to the assumption 
      $\forall t_2\oftype\Paren{0,t_1}\WeHave\Paren{\Sgn^{t_2}\NSat\neg\varphi_1}$.
\end{myitems}
\end{proof}
%\end{replemma}
\end{pvslemma}

%--------------------------------------------------------------------------------------------------------------------------------------------------------------
%--------------------------------------------------------------------------------------------------------------------------------------------------------------

\begin{reptheorem}{thm:duality}
  \RESnnfEqvSat
\begin{proof}
Immediate from \Lem~\ref{lem:duality-1} and \Lem~\ref{lem:duality-2}.
\end{proof}
\end{reptheorem}

%--------------------------------------------------------------------------------------------------------------------------------------------------------------
%--------------------------------------------------------------------------------------------------------------------------------------------------------------

\begin{repproposition}{prop:next}
\RESnext
\begin{proof}[\ProofIdea]
Apply two easy inductions on the structure of $\varphi$ 
(one for \Def~\ref{def:mtl-sem} and one for \Def~\ref{def:mtl-newsem}).
\end{proof}
\end{repproposition}

%--------------------------------------------------------------------------------------------------------------------------------------------------------------
%--------------------------------------------------------------------------------------------------------------------------------------------------------------

\begin{repproposition}{prop:subsize}
\PROPsubsize
\begin{proof}
During this proof, $\varphi'$, $\varphi'_1$ and $\varphi'_2$ stand for $\toOld[\varphi]$, $\toOld[\varphi_1]$, and $\toOld[\varphi_2]$, respectively.
Define $\Ops$ as a function that maps any \MTL\ formula $\varphi$ to the set of \MTL\ formulas that are operands of $\varphi$. Also, 
define $\Height$ as a function that maps $\top$, $\bot$, and every atomic proposition to $1$, and maps every other \MTL\ formula to $1$ plus height of its 
highest operand.
Let $\AAA$ be an arbitrary finite non-empty set of \MTL\ formulas, and define $\Height[\AAA]\defEQ\max_{\varphi\oftype\AAA}\Height[\varphi]$ to be the height 
of a highest formula in $\AAA$.

We use induction on $\Height[\AAA]$ to prove $\Size{\bigcup_{\varphi\oftype\AAA}\Sub_{\varphi'}}\leq6\Size{\bigcup_{\varphi\oftype\AAA}\Sub_\varphi}$.
Base of induction is where $\AAA\subseteq\Brace{\top,\bot}\cup\AP$, which is trivially true.
For the inductive step, let 
  $\BBB\defEQ\Brace{\varphi\oftype\AAA\filter\Height[\AAA]=\Height[\varphi]}$ be the set of all formulas in $\AAA$ that have the same height as $\AAA$. Also, 
let $\BBB_1\subseteq\BBB$ be the set of formulas in $\BBB$ that are {\em not} of the form $\Release{\varphi_1}{\varphi_2}[\II]$, and 
let $\BBB_2\defEQ\BBB\setminus\BBB_1$ be everything else in $\BBB$.
We know 
\[
  \bigcup_{\varphi\oftype\AAA}             \Sub_\varphi
  = 
  \bigcup_{\varphi\oftype\AAA\setminus\BBB}\Sub_\varphi \cup 
  \bigcup_{\varphi\oftype\BBB}\Sub_\varphi
  = 
  \overbrace{
  \bigcup_{\varphi\oftype\AAA\setminus\BBB}\Sub_\varphi \cup
  \bigcup_{\substack{\psi\oftype\BBB\\\varphi\oftype\Ops[\psi]}}\Sub_\varphi}^{\CCC} \cup
  \BBB_1\cup
  \BBB_2
\]
\begin{align*}
  \bigcup_{\varphi\oftype\AAA}             \Sub_{\varphi'}
  = & 
  \bigcup_{\varphi\oftype\AAA\setminus\BBB}\Sub_{\varphi'} \cup 
  \bigcup_{\varphi\oftype\BBB}\Sub_{\varphi'}
  \subseteq 
  \overbrace{
    \bigcup_{\varphi\oftype\AAA\setminus\BBB}\Sub_{\varphi'} \cup
    \bigcup_{\substack{\psi\oftype\BBB\\\varphi\oftype\Ops[\psi]}}\Sub_{\varphi'}}^{\CCC'} \cup
  \overbrace{
  \bigcup_{\varphi\oftype\BBB_1}\Brace{\varphi'}}^{\BBB'_1}
  \cup
  \\&
  \overbrace{
  \bigcup_{\Release{\varphi_1}{\varphi_2}[\II]\oftype\BBB_2}\Brace{
      \varphi',
      \underbrace{\Release{\varphi'_1}{\varphi'_2}[\II]}_{\psi_1},
            \underbrace{\Release{\ltlN\varphi'_1}{\varphi'_2}[\II]}_{\psi_2},
              \ltlN\varphi'_1,
              \varphi'_2\wedge\ltlN\varphi'_1,
              \psi_1\vee\psi_2
          }}^{\BBB'_2}
\end{align*}
The inclusion holds, because subformulas of operands of formulas in $\BBB$ are already considered in $\CCC'$.
Using induction hypothesis and knowing 
  $\bigcup_{\varphi\oftype\CCC}\Sub_{\varphi}=\CCC$ and
  $\bigcup_{\varphi\oftype\CCC'}\Sub_{\varphi}=\CCC'$, 
we conclude $\Size{\CCC'}\leq6\Size{\CCC}$.
The proof is complete once we notice, 
  $\CCC$, $\BBB_1$, and $\BBB_2$ are pairwise disjoint,
  $\Size{\BBB'_1}\leq\Size{\BBB_1}$, and 
  $\Size{\BBB'_2}\leq6\Size{\BBB_2}$.
\end{proof}
\end{repproposition}

\begin{wrapfigure}{r}{48mm}%
%\begin{floatingfigure}[r]{48mm}
\raggedleft
\vskip-2\baselineskip
\scalebox{0.64}{\begin{tikzpicture}[>=stealth']
  \definecolor{blue}  {rgb}{0.88,0.88,0.996}
  \definecolor{gray}  {rgb}{0.90,0.90,0.900}
  \definecolor{BGgray}{rgb}{0.97,0.97,0.970}
  \definecolor{red}   {rgb}{0.90,0.50,0.500}

  \tikzset{Fml/.style= {circle ,inner sep=0em,align=center,fill=blue,draw=black,line width=0.2mm,blur shadow={shadow blur steps=5}}}
  \tikzset{Atom/.style={ellipse,inner sep=0em,align=center,fill=gray,draw=black,line width=0.2mm,blur shadow={shadow blur steps=5}}}
  \tikzset{Compose/.style={rectangle,fit=(vee1)(vee2)(wedge)(rel1)(rel2)(nxt), fill=BGgray,draw=black,line width=0.2mm,dashed}}

  \path
  (-4.0 , 0.0 ) node[Fml ,minimum size  =6mm] (rel)   {$\ReleaseOp_\II$}
  (-4.85,-1.5 ) node[Atom,minimum height=6mm,minimum width=10mm] (phi1)  {$\varphi_1$}
  (-3.15,-1.5 ) node[Atom,minimum height=6mm,minimum width=10mm] (phi2)  {$\varphi_2$}
  ;

  \path
  ( 0.0 , 0.0 ) node[Fml ,minimum size  =6mm] (vee1)  {$\vee$}
  (-1.0 ,-1.0 ) node[Fml ,minimum size  =6mm] (vee2)  {$\vee$}
  ( 1.0 ,-1.0 ) node[Fml ,minimum size  =6mm] (wedge) {$\wedge$}
  (-1.5 ,-2.0 ) node[Fml ,minimum size  =6mm] (rel1)  {$\ReleaseOp_\II$}
  (-0.5 ,-2.0 ) node[Fml ,minimum size  =6mm] (rel2)  {$\ReleaseOp_\II$}
  ( 1.0 ,-2.75) node[Fml ,minimum size  =6mm] (nxt)   {$\ltlN$}
  (-0.85,-4.0 ) node[Atom,minimum height=6mm] (op1)   {$\toOld[\varphi_1]$}
  ( 0.85,-4.0 ) node[Atom,minimum height=6mm] (op2)   {$\toOld[\varphi_2]$}
  ;

  \begin{pgfonlayer}{bg} 
  \path
  (-2,0) node[Compose] (old) {}
  ;
  \end{pgfonlayer}

  \path[->,black,thick]
  (rel)  edge (phi1) 
  (rel)  edge (phi2)
  ;
  
  \path[->,black,thick]
  (vee1)  edge (vee2) 
  (vee1)  edge (wedge)
  (vee2)  edge (rel1) 
  (vee2)  edge (rel2) 
  (wedge) edge (nxt)  
  (rel1)  edge (op1)  
  (rel1)  edge (op2)  
  (nxt)   edge (op1)  
  (rel2)  edge (op2)  
  (wedge)           edge[bend left= 45] node [right] {} (op2)
  (rel2.south west) edge[bend left=-45] node [right] {} (nxt.west)
  ;

  \path[->,black,thick,dashed,line width=0.1mm]
  (rel)   edge[bend left= 45] node [right] {} (old.north west)
  ;

  \draw[->,black,thick,dashed,line width=0.1mm] (phi1.south) .. controls ([yshift=-20mm] phi1) and ([xshift=-20mm]               op1) .. (op1.west);
  \draw[->,black,thick,dashed,line width=0.1mm] (phi2.south) .. controls ([yshift=-30mm] phi2) and ([xshift=-30mm, yshift=-15mm] op2) .. (op2.south west);

\end{tikzpicture}  }
\vskip-2\baselineskip
\caption{}\label{fig:fml-dag}
\vskip-3\baselineskip
%\end{floatingfigure}
\end{wrapfigure}
\Fig~\ref{fig:fml-dag} shows the proof idea.
We look at $\varphi$ as a directed acyclic graph, with nodes being subformulas and edges between every node and its operands.
Function $\toOld$ changes this graph. The most interesting case is where $\varphi$ is of the form $\Release{\varphi_1}{\varphi_2}[\II]$.
As it is shown in this figure, $\ReleaseOp$ nodes are replaced by at most $6$ nodes. 
Note that $\vee$ and $\wedge$ operators are associative.
Also, expanding $\ltlN$ to its definition, does not change number of nodes.
%The full proof is moved to the appendix in the interest of space.

%--------------------------------------------------------------------------------------------------------------------------------------------------------------
%--------------------------------------------------------------------------------------------------------------------------------------------------------------

\begin{repproposition}{prop:until-proofset}
\RESUProofSet
\begin{proof}[\ProofIdea]
This is a simple application of \UntilOp\ semantics.
The case $i=1$ holds trivially.
For the other two cases, show the extra condition in \aref{def:until-proofset} is enough to find $w'$ such that $(r,w')$ is in the \UntilOp\ proof set of 
type~1.
\end{proof}
\end{repproposition}

\begin{reptheorem}{thm:until-fv}
\RESUfvar
\begin{proof}[\ProofIdea]
Let $T_1$ be the set of all points $t\oftype\LclRop{0,\Step[\II]}$ for which not only $f^t\NSat\Until{\varphi_1}{\varphi_2}[\II]$ is true, but also there is a 
$w\oftype\II$ such that $t\in\Uproofset^1(\varphi_1,\varphi_2,\iInf{T_1},w,\II)$.
Let $T_2$ be the set of all points $t\oftype\LclRop{0,\Step[\II]}$ for which $f^t\NSat\Until{\varphi_1}{\varphi_2}[\II]$ and $t\notin T_1$ are both true.
Use the fact that every point in $T_1$ and $T_2$ has a \UntilOp\ witness of type~1 and show $T_1$ and $T_2$ are both convex sets.
Therefore, we can look at them as intervals. This proves the second condition of the theorem.
To prove the first condition, first note that, by definition, $T_1\cap T_2=\emptyset$.
Use the fact that witness $w$ for any point in $T_1$ belong to $\II$ and $T_1\cap T_2=\emptyset$, to show that every witness $w$ for a point in $T_2$ is not 
only outside of $\II$ but also non-strictly larger than supremum of \II. Conclude every point in $T_1$ is strictly smaller than all points in $T_2$. 
\end{proof}
\end{reptheorem}

\begin{reptheorem}{thm:until-witness-point}
\RESUwitness
\begin{proof}[\ProofIdea]
Continue the proof of \aref{thm:until-fv} and consider the set of witnesses that can be used for points in $T_1$ and the set of witnesses that can be used for 
points in $T_2$. Call these sets $W_1$ and $W_2$.
Define $w_1\defEQ\iSup{W_1}$ and $w_2\defEQ\iInf{W_2}$.
If $w_1\in W_1$ then 
  use \UntilOp\ witness of type~1 (\ie\ $i=1$ in the first part). Otherwise, knowing $f$ is finitely variable from left, 
  use \UntilOp\ witness of type~2 (\ie\ $i=2$ in the first  part). Similarly,
if $w_2\in W_2$ then 
  use \UntilOp\ witness of type~1 (\ie\ $i=1$ in the second part). Otherwise, knowing $f$ is finitely variable from right,
  use \UntilOp\ witness of type~3 (\ie\ $i=3$ in the second part). 
The rest is about proving the choices for $i$ and $w_1$, $w_2$ satisfy all the conditions in this theorem.
\end{proof}
\end{reptheorem}

Proofs of \aref{prop:release-proofset}, \aref{thm:release-fv}, and \aref{thm:release-witness-point} follow the exact same steps as in the corresponding 
results for $\UntilOp$ operator. However, since semantics of \ReleaseOp\ operator, as defined in \aref{def:mtl-newsem}, is more complex than the semantics of 
\UntilOp\ operator, proofs are more involved. For example, instead of only two intervals $T_1$ and $T_2$, we needed four intervals $T_1,\ldots,T_4$ for 
\ReleaseOp\ operator.

\end{document}